\definecolor{DarkRed}{rgb}{0.5,0.1,0.1}
\definecolor{DarkBlue}{rgb}{0.1,0.1,0.5}
\def\BState{\State\hskip-\ALG@thistlm}
\newtheorem{theorem}{Theorem}
\newtheorem{lemma}{Lemma}[section]
\newtheorem{proposition}[lemma]{Proposition}
\newtheorem{claim}[lemma]{Claim}
\newtheorem{definition}{Definition}
\newtheorem{problem}{Problem}
\newtheorem{remark}[lemma]{Remark}
\newtheorem*{claim*}{Claim}
\newtheorem*{proposition*}{Proposition}
\newtheorem*{lemma*}{Lemma}
\newtheorem*{problem*}{Problem}
\newtheorem{mdresult}{Result}
\newenvironment{result}{\begin{mdframed}[backgroundcolor=lightgray!40,topline=false,rightline=false,leftline=false,bottomline=false,innertopmargin=2pt]\begin{mdresult}}{\end{mdresult}\end{mdframed}}
\newtheorem{mdinvariant}{Invariant}
\renewcommand{\qed}{\nobreak \ifvmode \relax \else
      \ifdim\lastskip<1.5em \hskip-\lastskip
      \hskip1.5em plus0em minus0.5em \fi \nobreak
      \vrule height0.75em width0.5em depth0.25em\fi}
\newcommand{\toShrink}{-.20cm}
\newcommand{\toShrinkEnu}{-.2cm}
\newcommand{\Qed}[1]{\hspace*{\fill} $\qed_{\textnormal{~#1}}$}
\newcommand{\Mstar}{\ensuremath{M^{\star}}}
\newcommand{\eps}{\ensuremath{\varepsilon}}
\newcommand{\Paren}[1]{\Big(#1\Big)}
\newcommand{\bracket}[1]{\left[#1\right]}
\newcommand{\paren}[1]{\ensuremath{\left(#1\right)}\xspace}
\newcommand{\card}[1]{\left\vert{#1}\right\vert}
\newcommand{\expect}[1]{\Exp\bracket{#1}}
\newcommand{\set}[1]{\ensuremath{\left\{ #1 \right\}}}
\DeclareMathOperator*{\Exp}{\ensuremath{{\mathbb{E}}}}
\DeclareMathOperator*{\Prob}{\ensuremath{\textnormal{Pr}}}
\renewcommand{\Pr}{\Prob}
\newcommand{\Ex}{\Exp}
\newcommand{\etal}{{\it et al.\,}}
\newcommand{\event}[1]{\ensuremath{{\sf E}_{#1}}}
\newcommand{\rs}{Ruzsa-Szemer\'{e}di\xspace}
\newenvironment{tbox}{\begin{tcolorbox}[
		enlarge top by=5pt,
		enlarge bottom by=5pt,
		 breakable,
		 drop shadow=black,
		 boxsep=0pt,
                  left=4pt,
                  right=4pt,
                  top=10pt,
                  arc=0pt,
                  boxrule=1pt,toprule=1pt,
                  colback=white
                  ]%%
	}
{\end{tcolorbox}}
\newcommand{\bA}{\bm{A}}
\newcommand{\bB}{\ensuremath{\bm{B}}}
\renewcommand{\event}{\mathcal{E}}
\title{Towards a Unified Theory of Sparsification for Matching Problems}
\author{Sepehr Assadi\footnote{\texttt{sassadi@cis.upenn.edu}. Supported in part by the National Science Foundation grant CCF-1617851.}  \\ University of Pennsylvania
\and Aaron Bernstein\footnote{\texttt{bernstei@gmail.com}.} \\ Rutgers University
}
\date{}
\begin{document}
\maketitle

\begin{abstract}
	In this paper, we present a construction of a ``matching sparsifier'', that is, a sparse subgraph of the given graph that preserves large matchings approximately and is robust to modifications of the graph. 
	We use this matching sparsifier to obtain several new algorithmic results for the maximum matching problem: 
	
	\begin{itemize}
		\item An almost $(3/2)$-approximation one-way communication protocol for the maximum matching problem, significantly simplifying 
		the $(3/2)$-approximation protocol of Goel, Kapralov, and Khanna (SODA 2012) and extending it from bipartite graphs to general graphs. 
		\item An almost $(3/2)$-approximation algorithm for the stochastic matching problem, improving upon and significantly simplifying the previous $1.999$-approximation algorithm of Assadi, Khanna, and Li (EC 2017). 
		\item An almost $(3/2)$-approximation algorithm for the fault-tolerant matching problem, which, to our knowledge, is the first non-trivial algorithm for this problem.  
	\end{itemize}
	
	\medskip
	
	Our matching sparsifier is obtained by proving new properties of the edge-degree constrained subgraph (EDCS) of Bernstein and Stein (ICALP 2015; SODA 2016)---designed in the context of maintaining matchings in dynamic graphs---that 
	identifies EDCS as an excellent choice for a matching sparsifier. This leads to surprisingly simple and non-technical proofs of the above results in a unified way. 
	Along the way, we also provide a much simpler proof of the fact that an EDCS is guaranteed to contain a large matching, which may be of independent interest.
	
\end{abstract}
%\input{abstract}
%\clearpage
%\setcounter{page}{0}

\renewcommand{\deg}[2]{\ensuremath{\textnormal{deg}_{#1}(#2)}}

\newcommand{\degG}[1]{\deg{}{#1}}

\newcommand{\MM}[1]{\ensuremath{\mu(#1)}}

\newcommand{\tG}{\widetilde{G}}
\newcommand{\tH}{\widetilde{H}}
\newcommand{\tE}{\widetilde{E}}
\newcommand{\tbeta}{\widetilde{\beta}}

\section{Introduction}\label{sec:intro}

	A common tool for dealing with massive graphs is sparsification. Roughly speaking, a sparsifier of a graph $G$ is a subgraph $H$ that (approximately) preserves certain properties of $G$ while having a smaller number of 
	edges. Such sparsifiers have been studied in great detail for various properties: for example, a spanner~\cite{Awerbuch85,PelegS89} or a distance preserver~\cite{BollobasCE03,CoppersmithE06} preserves pairwise distances,
	 a cut sparsifier~\cite{Karger94,BenczurK96,FungHHP11} preserves cut information, and a spectral sparsifier~\cite{SpielmanT11,BatsonSS09} preserves spectral properties of the graph.
	An additional property that we often require of a graph sparsifier is \emph{robustness}: it should continue to be a good sparsifier even as the graph changes.
	Some sparsifiers are robust by nature (e.g cut sparsifiers), but others (e.g spanners) are not, and for this reason there is an extensive literature on designing sparsifiers that can provide additional robustness guarantees.

	In this paper, we study the problem of designing robust sparsifiers for the prominent problem of \emph{maximum matching}.
	Multiple notions of sparsification for the matching problem have already been identified in the literature.
	One example is a subgraph that preserves the
	largest matching inside any given subset of vertices in $G$ approximately. This notion is also known
	as a {matching cover} or a {matching skeleton}~\cite{GoelKK12,LeeS17} in the literature and is closely related to the communication and streaming complexity of the matching problem. Another example of a sparsifier 
	is a subgraph that can preserve the largest matching on {random} subsets of edges of $G$, a notion closely related to the stochastic matching problem~\cite{BlumDHPSS15,AssadiKL17ec}. 
	An example of a robust sparsifier for matching is a fault-tolerant subgraph, namely a subgraph $G$ that continue to preserve large matchings in $G$ even after a fraction of the edges is deleted by an adversary. 
	As far as we know, the fault-tolerant matching problem has not previously been studied, but it is a natural model to consider as it has received lots of attention in the context of spanners
	and distance preservers (see e.g. ~\cite{ChechikLPR09,Peleg09,BaswanaCR16,BodwinGPW17,BodwinDPW18}).

	 Our \underline{first contribution} is a subgraph $H$ that we show is a robust matching sparsifier in \emph{all} of the senses above. Our result is thus the first to unify these notions of sparsification for the maximum matching problem. In addition to 
	 unifying, our construction yields improved results for each individual notion of sparsification and the corresponding problems, namely, the one-way communication complexity of matching, stochastic matching, and fault-tolerant 
	 matching problems. Interestingly, our unified approach allows us to also provide much simpler proofs than all previously existing work for these problems. 
	 The subgraph we use as our sparsifier comes from a pair of papers by Bernstein and Stein on dynamic matching \cite{BernsteinS15,BernsteinS16}---they refer to this subgraph as an edge-degree 
	 constrained subgraph (EDCS for short).  The EDCS was also very recently used in~\cite{AssadiBBMS17} to design sublinear algorithms for matching across several different models for massive graphs. Our applications of the EDCS in the current paper, as well as the new properties we prove for the EDCS, are quite different from those in ~\cite{BernsteinS15,BernsteinS16,AssadiBBMS17}. Our first contribution thus takes an existing subgraph, and then provides the first proofs that it satisfies the three notions of sparsification described above. 
	 
	 Our \underline{second contribution} is a {much} simpler (and even slightly improved) proof of the main property of an EDCS in previous work proved in \cite{BernsteinS15,BernsteinS16}, namely that an EDCS contains a large matching of the original 
	 graph. Our new proof significantly simplifies the analysis of \cite{BernsteinS16} and allows for simple and self-contained proofs of the results in this paper.
	 
	 \paragraph{Definition of the EDCS.}
	 Before stating our results, we give a definition of the EDCS from~\cite{BernsteinS15,BernsteinS16}, as this is the subgraph we use for all of our results (see Section~\ref{sec:prelim} for more details).

	 \begin{definition}[\!\!\cite{BernsteinS15}]\label{def:edcs}
	 	For any graph $G(V,E)$ and integers $\beta \geq \beta^- \geq 0$, an \emph{edge-degree constrained subgraph (EDCS)} $(G,\beta,\beta^-)$ is a subgraph $H:=(V,E_H)$ of $G$ with the following two properties: 
	 	\begin{enumerate}[(P1)]
	 		\item\label{long-property1} For any edge $(u,v) \in E_H$: $\deg{H}{u} + \deg{H}{v} \leq \beta$. 
	 		\item\label{long-property2} For any edge $(u,v) \in E \setminus E_H$: $\deg{H}{u} + \deg{H}{v} \geq \beta^-$.  
	 	\end{enumerate}

	 \end{definition}
 
	It is not hard to show that an EDCS of a graph $G$ always exists for any parameters $\beta > \beta^-$ and that it is sparse, i.e., only has $O(n\beta)$ edges. 
	A key property of EDCS proven previously~\cite{BernsteinS15,BernsteinS16} (and simplified in our paper) 
	is that for any reasonable setting of the parameters (e.g. $\beta^-$ being sufficiently close to $\beta$), any EDCS $H$ of $G$ contains an (almost) $3/2$ approximate  
	matching of $G$.

\subsection{Our Results and Techniques}\label{sec:results}

We now give detailed definitions of the notions of sparsification and the corresponding problems addressed in this paper, as well as our results for each one. 
Our second contribution---a significantly simpler proof that an EDCS contains an almost $(3/2)$-approximate matching---is left for Section~\ref{sec:edcs}.

\paragraph{One-Way Communication Complexity of Matching.} Consider the following two-player communication problem: Alice is given a graph $G_A(V,E_A)$ and Bob holds a graph $G_B(V,E_B)$. 
The goal for Alice is to send a single message to Bob such that Bob outputs an approximate maximum matching in $E_A \cup E_B$. What is the minimum length of the message, i.e., the one-way communication complexity, for achieving a certain fixed approximation ratio on all graphs? One can show that the message communicated by Alice to Bob is indeed a matching skeleton, namely a data structure (but not necessarily a subgraph), that allows Bob to find 
a large matching in a given subset of vertices in Alice's input (see~\cite{GoelKK12} for more details).

This problem was first studied by Goel, Kapralov, and Khanna~\cite{GoelKK12} (see also the subsequent paper of Kapralov~\cite{Kapralov13}), owing to its close connection to
one-pass streaming algorithms for matching. Goel~\etal~\cite{GoelKK12} designed an algorithm that achieves a $(3/2)$-approximation in bipartite graphs using only $O(n)$ communication and proved
that any better than $(3/2)$-approximation protocol requires $n^{1+\Omega(\frac{1}{\log\log{n}})}$ communication even on bipartite graphs (see, e.g.~\cite{GoelKK12,AssadiKL17} for further details on this lower bound). 
A follow-up work by Lee and Singla~\cite{LeeS17} further generalized the algorithm of~\cite{GoelKK12} to general graphs, albeit with a slightly worse approximation ratio of $5/3$ (compared to $3/2$ of~\cite{GoelKK12}). 

We extends the results in~\cite{GoelKK12} to general graphs with almost no loss in approximation.

\begin{result}\label{res:communication}
	For any constant $\eps > 0$, the protocol where Alice computes an EDCS of her graph with $\beta=O(1)$ and $\beta^-=\beta-1$ and sends it to Bob is a $(3/2+\eps)$-approximation one-way communication protocol for the 
	maximum matching problem with uses $O_{}(n)$ communication. 
\end{result}

We remark that both the previous algorithm of~\cite{GoelKK12} as well as its extension in~\cite{LeeS17} are quite involved and rely on a fairly complicated graph decomposition as 
well as an intricate primal-dual analysis. As such, we believe that the main contribution in Result~\ref{res:communication} is in fact in providing a simple and self-contained proof of this result. % that hopefully sheds more light on this fundamental problem.  

\paragraph{Stochastic Matching.} In the stochastic matching problem, we are given a graph $G(V,E)$ and a probability parameter $p \in (0,1)$. A realization of $G$ is a subgraph $G_p(V,E_p)$ obtained by picking 
each edge in $G$ independently with probability $p$ to include in $E_p$. The goal in this problem is to find a subgraph $H$ of $G$ with max-degree bounded by a function of $p$ (independent of number of vertices), such that
the size of maximum matching in realizations of $H$ is close to size of maximum matching in realizations of $G$. It is immediate to see that $H$ in this problem is simply a sparsifier of $G$ which preserves large matchings on random subsets of 
edges. 

This problem was first introduced by Blum~\etal~\cite{BlumDHPSS15} primarily to model the kidney exchange setting and has since been studied extensively in the literature~\cite{AssadiKL16ec,AssadiKL17ec,BehnezhadR18,YamaguchiM18}. 
Early algorithms for this problem in~\cite{BlumDHPSS15,AssadiKL16ec} (and the later ones for the weighted variant of the problem~\cite{BehnezhadR18,YamaguchiM18}) all had approximation ratio at least $2$, naturally
raising the question that whether $2$ is the best approximation ratio achievable for this problem. Assadi, Khanna, and Li~\cite{AssadiKL17ec} ruled out this perplexing possibility by obtaining a slightly better than $2$-approximation
algorithm for this problem, namely an algorithm with approximation ratio close to $1.999$ (which improves to $1.923$ for small $p$).

We prove that using an EDCS results in a significantly improved algorithm for this problem. 

\begin{result}\label{res:stochastic}
	For any constant $\eps > 0$, an $EDCS$ of $G$ with $\beta = O(\frac{\log{(1/p)}}{p})$ and $\beta^- = \beta - 1$ achieves a
	$(3/2+\eps)$-approximation algorithm for the stochastic matching problem with a subgraph of maximum degree $O(\frac{\log{(1/p)}}{p})$.
\end{result}
\noindent
We remark that our bound on the maximum degree in Result~\ref{res:stochastic} is optimal (up to an $O(\log{(1/p)})$ factor) for any constant-factor approximation algorithm (see~\cite{AssadiKL17ec}).
In addition to significantly improving upon the previous best algorithm of~\cite{AssadiKL17ec}, our Result~\ref{res:stochastic} is much simpler than that of~\cite{AssadiKL17ec}, in terms of the both the algorithm and (especially) the analysis. 

\emph{Remark.} Independently and concurrently, Behnezhad~\etal~\cite{BehnezhadFHR19} also presented an algorithm for stochastic matching with a subgraph of max-degree $O(\frac{\log{(1/p)}}{p})$
that achieves an approximation of almost $(4\sqrt{2}-5)$ ($\approx 0.6568$ compared to $0.6666$ in Result~\ref{res:stochastic}). They also provided an algorithm with approximation ratio strictly better than half 
for weighted stochastic matching (our result does not work for weighted graphs). In terms of techniques, our paper and~\cite{BehnezhadFHR19} are entirely disjoint. 

\paragraph{Fault-Tolerant Matching.} Let $f \geq 0$ be an integer, $G(V,E)$ be a graph, and $H$ be any subgraph of $G$. We say that $H$ is an $\alpha$-approximation $f$-tolerant subgraph of $G$ iff
for any subset $F \subseteq E$ of size $\leq f$, the 
maximum matching in $H \setminus F$ is an $\alpha$-approximation to maximum matching in $G \setminus F$ -- that is, $H$ is a robust sparsifier of $G$. This definition is a natural analogy of other fault-tolerant subgraphs, such as fault-tolerant spanners and fault-tolerant distance preservers (see, e.g.~\cite{ChechikLPR09,Peleg09,BaswanaCR16,BodwinGPW17,BodwinDPW18}), to the maximum matching problem. Despite being such fundamental objects, quite surprisingly fault-tolerant subgraphs have not previously been studied for the matching problem. 

We complete our discussion of applications of EDCS as a robust sparsifier by showing that it achieves an optimal size fault-tolerant subgraph for the matching problem. 

\begin{result}\label{res:fault-tolerant}
	For any constant $\eps > 0$ and any integer $f \geq 0$, there exists a $(3/2+\eps)$-approximation $f$-tolerant subgraph $H$ of any given graph $G$ with $O(f + n)$ edges in total. 
\end{result}

The number of edges used in our fault-tolerant subgraph in Result~\ref{res:fault-tolerant} is clearly optimal (up to constant factors). In Appendix~\ref{app:optimal-fl}, we show that by modifying the lower bound of~\cite{GoelKK12} in the 
communication model, we can also prove that the approximation ratio of $(3/2)$ is optimal for any $f$-tolerant subgraph with $O(f)$ edges, hence proving that Result~\ref{res:fault-tolerant} is optimal in a strong sense. 
We also show that several natural strategies for this problem cannot achieve better than $2$-approximation, hence motivating our more sophisticated approach toward this problem (see Appendix~\ref{app:fault-tolerant}). 

\smallskip

The qualitative message of our work is clear: \emph{An EDCS is a robust matching sparsifier under all three notions of sparsification described earlier, which leads to simpler and improved algorithms for a wide range of problems
involving sparsification for matching problems in a unified way.}

\subsection*{Overall Proof Strategy}
\newcommand{\gstar}{G^*}
\newcommand{\hstar}{H^*}
Recall that our algorithm in all of the results above is simply to compute an EDCS $H$ of the input graph $G$ (or $G_A$ in the communication problem). The analysis then depends on the specific notion of sparsification at hand, but the same high-
level idea applies to all three cases. In each case, we have an original graph $G$, and then a modified graph $\gstar$ produced by changes to $G$: $\gstar$ is $G_A \cup G_B$ in the communication model, the realized subgraph $G_p$ in the 
stochastic matching, and the graph $G \setminus F$ after adversarially removing edges $F$ in the fault-tolerant matching problem. Let $H$ be the EDCS that our algorithm computes in $G$, and let $\hstar$ be the graph that results from $H$ due to 
the modifications made to $G$. If we could show that $\hstar$ is an EDCS of $\gstar$ then the proof would be complete, since we know that an EDCS is guaranteed to contain an almost $(3/2)$-approximate matching. Unfortunately, in all the three 
problems that we study it might not be the case that $\hstar$ is an EDCS of $\gstar$. Instead in each case we are able to exhibit subgraphs $\tH \subseteq \hstar$ and $\tG \subseteq \gstar$ such that $\tH$ is an EDCS of $\tG$, and size of maximum 
matching of $\tG$ and $\gstar$ differ by at most a $(1+\eps)$ factor. This guarantees an approximation ratio of almost $(3/2)(1 + \eps)$ (precisely what we achieve in all three results above), since the EDCS $\tH$ preserves the maximum matching in 
$\tG$ to within an almost $(3/2)$-approximation and $\tH$ is a subgraph of $H$.

\paragraph{Organization.} The rest of the paper is organized as follows. Section~\ref{sec:prelim} includes notation, simple preliminaries, and existing work on the EDCS. In Section~\ref{sec:edcs}, we present a significantly simpler proof of the fact that an EDCS contains an almost $(3/2)$-approximation matching (originally proved in \cite{BernsteinS16}). Sections~\ref{sec:communication},~\ref{sec:stochastic}, and~\ref{sec:fault-tolerant} 
prove the sparsification properties of the EDCS in, respectively, the one-way communication complexity of matching (Result~\ref{res:communication}), the stochastic matching problem (Result~\ref{res:stochastic}), 
and the fault-tolerant matching problem (Result~\ref{res:fault-tolerant}). These three sections are designed to be self-contained (beside assuming the background in Section~\ref{sec:prelim}) to allow the reader to directly consider the part of most interest. The appendix contains some secondary observations.

\section{Preliminaries and Notation}\label{sec:prelim}

\newcommand{\mustar}{\mu^*}

\paragraph{Notation.} For any integer $t \geq 1$, $[t] := \set{1,\ldots,t}$. For a graph $G(V,E)$ and a set of vertices $U \subseteq V$, $N_G(U)$ denotes the neighbors of vertices in $U$ in $G$ and $E_G(U)$ denotes the 
set of edges incident on $U$. Similarly, for a set of edges $F \subseteq E$, $V(F)$ denotes the set of vertices incident on these edges. 
For any vertex $v \in V$, we use $\deg{G}{v}$ to denote the degree of $v \in V$ in $G$ (we may drop the subscript $G$ in these definitions  if it is clear from the context). 
We use $\MM{G}$ to denote the size of the maximum matching in the graph $G$. 

\medskip

Throughout the paper, we use the following two standard variants of the Chernoff bound. 

%%%% TO SEPEHR: when you write the new version of the chernoff bound, please use macro \mustar to refer to the upper bound on Ex[X]. If you'd like, you are 100%
%% welcom to change the definition of the macro to something other than \mu^*. the macro is defined at the beginning of the section.

\begin{proposition}[Chernoff Bound]\label{prop:chernoff}
	Suppose $X_1,\ldots,X_t$ are $t$ independent random variables that take values in $[0,1]$. Let $X := \sum_{i=1}^{t} X_i$ and assume $\Ex\bracket{X} \leq \lambda$. For any $\delta > 0$ and integer $k \geq 1$,
	\begin{align*}
		&\Pr\Paren{\card{X - \Ex\bracket{X}} \geq \delta \cdot \lambda} \leq 2\cdot\exp\Paren{-\frac{\delta^2 \cdot {\lambda}}{3}}\quad \& \quad \Pr\Paren{\card{X - \Ex\bracket{X}} \geq k} \leq 2\cdot\exp\Paren{-\frac{2k^2}{t}}.
	\end{align*} 
\end{proposition}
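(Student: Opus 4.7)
The plan is to prove both bounds via the standard moment-generating-function (MGF) argument: for any $s > 0$, Markov's inequality applied to $e^{s(X - \Ex\bracket{X})}$ yields
$\Pr\paren{X - \Ex\bracket{X} \geq \tau} \leq e^{-s\tau} \cdot \Ex\bracket{e^{s(X - \Ex\bracket{X})}}$,
and by independence of the $X_i$'s the joint MGF factors as a product over $i$. So the whole argument reduces to bounding a single-variable MGF on $[0,1]$. The lower tails are handled symmetrically (with $s < 0$, or by applying the upper-tail argument to $1 - X_i$) and, together with a union bound, account for the factor of $2$ on the right-hand side of each inequality.

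For the first (multiplicative) bound, I would use the convexity estimate $e^{sx} \leq 1 + x(e^s-1)$ on $[0,1]$ to obtain $\Ex\bracket{e^{sX_i}} \leq \exp\paren{(e^s - 1)\Ex\bracket{X_i}}$, so by independence $\Ex\bracket{e^{sX}} \leq \exp\paren{(e^s-1)\Ex\bracket{X}}$. Setting $\delta' := \delta\lambda/\Ex\bracket{X} \geq \delta$ and optimizing at $s = \ln(1+\delta')$ yields the textbook form $\Pr\paren{X \geq (1+\delta')\Ex\bracket{X}} \leq \exp\paren{-(\delta')^2\Ex\bracket{X}/3}$ after invoking the elementary inequality $(1+\delta')\ln(1+\delta') - \delta' \geq (\delta')^2/3$. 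Since $\delta' \cdot \Ex\bracket{X} = \delta\lambda$, and $(\delta')^2 \cdot \Ex\bracket{X} = \delta^2\lambda^2/\Ex\bracket{X} \geq \delta^2\lambda$ (using $\lambda \geq \Ex\bracket{X}$), the exponent simplifies to at most $-\delta^2\lambda/3$, which is exactly the stated bound.

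For the second (Hoeffding-type) bound, I would appeal to Hoeffding's lemma: any centered random variable $Y$ supported in an interval of length $\ell$ satisfies $\Ex\bracket{e^{sY}} \leq \exp\paren{s^2\ell^2/8}$. Applied to $Y_i := X_i - \Ex\bracket{X_i}$, each $Y_i$ is centered and supported in an interval of length at most $1$, so by independence $\Ex\bracket{e^{s(X - \Ex\bracket{X})}} \leq \exp\paren{ts^2/8}$. Markov's inequality then yields $\Pr\paren{X - \Ex\bracket{X} \geq k} \leq \exp\paren{-sk + ts^2/8}$, and minimizing over $s$ at the choice $s = 4k/t$ produces the stated exponent $-2k^2/t$.

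The only mildly subtle point is the bookkeeping in the first bound: one must carefully convert the textbook deviation of the form $\delta' \cdot \Ex\bracket{X}$ into the stated form $\delta \cdot \lambda$ using only the hypothesis $\Ex\bracket{X} \leq \lambda$, and verify that the exponent does not degrade in the process. Aside from that, the proof is a routine assembly of standard MGF ingredients; in particular, no EDCS-specific machinery from the rest of the paper is needed, which is why the proposition is simply recorded as a preliminary.
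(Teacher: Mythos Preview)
The paper does not prove this proposition; it is recorded in the preliminaries as a standard concentration inequality and invoked later without argument. Your MGF derivation (convexity bound $e^{sx}\le 1+x(e^{s}-1)$ for the multiplicative form, Hoeffding's lemma for the additive form, then optimization over $s$) is exactly the textbook proof, so there is nothing to compare.

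One small caveat worth noting: the elementary inequality $(1+\delta')\ln(1+\delta')-\delta'\ge(\delta')^{2}/3$ that you invoke holds only for $\delta'$ in a bounded range (roughly $\delta'\lesssim 1$), not for all $\delta'>0$. Since your reduction sends $\delta$ to $\delta'=\delta\lambda/\Ex[X]$, which can be arbitrarily large even when $\delta$ is small, the argument as written does not cover the full ``any $\delta>0$'' claimed in the statement. This is really a looseness in the proposition as stated rather than a flaw peculiar to your proof, and it is irrelevant for the paper's applications (which only use small multiplicative deviations or the Hoeffding form); but if you want a fully rigorous write-up you should either restrict to $\delta\le 1$ or substitute the inequality $(1+\delta')\ln(1+\delta')-\delta'\ge(\delta')^{2}/(2+\delta')$, which is valid for all $\delta'>0$, and then check the final exponent accordingly.
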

\noindent
We also need the following basic variant of Lovasz Local Lemma (LLL). 
\begin{proposition}[Lovasz Local Lemma; cf.~\cite{ErdosL73,AlonS04}]\label{prop:lll}
	Let $p \in (0,1)$ and $d \geq 1$. Suppose $\event_1,\ldots,\event_t$ are $t$ events such that $\Pr\paren{\event_i} \leq p$ for all $i \in [t]$ and each $\event_i$ is mutually independent of all but (at most) $d$ other events $\event_j$. 
	If $p \cdot (d+1) < 1/e$ then $\Pr\paren{\cap_{i=1}^{n}\overline{\event_i}} > 0$. 
\end{proposition}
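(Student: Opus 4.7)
The plan is to follow the classical inductive proof of Erd\H{o}s and Lov\'asz. The heart of the argument is the stronger inductive claim that for every $i \in [t]$ and every subset $S \subseteq [t] \setminus \{i\}$,
\[
	\Pr\paren{\event_i \mid \bigcap_{j \in S} \overline{\event_j}} \leq \frac{1}{d+1}.
\]
I would prove this by induction on $\card{S}$, with the base case $S = \emptyset$ following immediately from $\Pr(\event_i) \leq p < \frac{1}{e(d+1)} \leq \frac{1}{d+1}$. Given the claim, the conclusion of the lemma follows at once by a single application of the chain rule: $\Pr\paren{\bigcap_{i=1}^{t} \overline{\event_i}} = \prod_{i=1}^{t} \paren{1 - \Pr(\event_i \mid \bigcap_{j<i} \overline{\event_j})} \geq (1 - 1/(d+1))^{t} > 0$.

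For the inductive step, let $D(i) \subseteq [t] \setminus \{i\}$ denote the (size-$\leq d$) set of events on which $\event_i$ is \emph{not} mutually independent, and split $S = S_1 \cup S_2$ with $S_1 = S \cap D(i)$ and $S_2 = S \setminus D(i)$. I would then write the target conditional probability as the ratio
\[
	\Pr\paren{\event_i \mid \bigcap_{j \in S} \overline{\event_j}} = \frac{\Pr\paren{\event_i \cap \bigcap_{j \in S_1} \overline{\event_j} \mid \bigcap_{j \in S_2} \overline{\event_j}}}{\Pr\paren{\bigcap_{j \in S_1} \overline{\event_j} \mid \bigcap_{j \in S_2} \overline{\event_j}}}.
\]
The numerator is at most $\Pr(\event_i \mid \bigcap_{j \in S_2} \overline{\event_j}) = \Pr(\event_i) \leq p$, using mutual independence of $\event_i$ from $\{\event_j\}_{j \in S_2}$ (which holds since $S_2 \cap D(i) = \emptyset$). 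For the denominator, enumerating $S_1 = \{j_1,\ldots,j_k\}$ with $k \leq d$ and applying the chain rule gives
\[
	\prod_{\ell=1}^{k} \Pr\paren{\overline{\event_{j_\ell}} \mid \bigcap_{m<\ell} \overline{\event_{j_m}} \cap \bigcap_{j \in S_2} \overline{\event_j}};
\]
each factor's conditioning set has cardinality strictly less than $\card{S}$, so the induction hypothesis applied to the index $j_\ell$ bounds each factor from below by $1 - 1/(d+1)$. Consequently the denominator is at least $(1 - 1/(d+1))^d \geq 1/e$, and the overall ratio is at most $ep < 1/(d+1)$, closing the induction.

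The main obstacle is the denominator estimate: one has to arrange the partition so that every conditioning set appearing in the chain-rule expansion is strictly smaller than $\card{S}$, which is precisely why I split $S$ against the dependency neighborhood $D(i)$ rather than trying to condition on all of $S$ at once. The standard inequality $(1 - 1/(d+1))^d \geq 1/e$, which makes the hypothesis $p(d+1) < 1/e$ exactly tight, is a routine consequence of the bound $1 - x \geq \exp(-x/(1-x))$ for $x \in (0,1)$ applied with $x = 1/(d+1)$.
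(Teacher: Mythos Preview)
Your argument is the standard, correct inductive proof of the symmetric Lov\'asz Local Lemma due to Erd\H{o}s and Lov\'asz, and every step checks out (including the denominator bound $(1-1/(d+1))^d = (1+1/d)^{-d} \geq 1/e$). The paper, however, does not prove this proposition at all: it is stated in the preliminaries as a known result with citations to~\cite{ErdosL73,AlonS04}, and is simply invoked later as a black box. So there is no ``paper's own proof'' to compare against; your write-up supplies exactly the classical argument that the cited references contain.
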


\paragraph{Hall's Theorem.} We use the following standard extension of the Hall's marriage theorem for characterizing maximum matching size in bipartite graphs.

\begin{proposition}[Extended Hall's marriage theorem; cf.~\cite{Hall35}]\label{prop:halls-marriage}
	Let $G(L,R,E)$ be any bipartite graph with $\card{L} = \card{R} = n$. Then,
	$
		\max \Paren{\card{A} - \card{N(A)}} = n - \MM{G}, 
	$
	where $A$ ranges over $L$ or $R$. We refer to such set $A$ as a \emph{witness set}. 
\end{proposition}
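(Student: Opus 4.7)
My plan is to prove both directions of the identity and exploit symmetry between the two sides. Since $\card{L} = \card{R} = n$, the same argument applies with $L$ and $R$ interchanged, so it suffices to prove $n - \MM{G} = \max_{A \subseteq L}(\card{A} - \card{N(A)})$; the statement for $A$ ranging over $R$ follows from the identical argument applied symmetrically (and in particular the two maxima coincide).

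\emph{Upper bound ($\leq$).} Fix any $A \subseteq L$ and any matching $M$ of $G$. Every edge of $M$ incident on a vertex of $A$ has its other endpoint in $N(A)$, so at most $\card{N(A)}$ vertices of $A$ are saturated by $M$. Thus $M$ leaves at least $\card{A} - \card{N(A)}$ vertices of $L$ unmatched, giving $\card{M} \leq n - (\card{A} - \card{N(A)})$. Taking $M$ to be a maximum matching and maximizing over $A$ yields $\MM{G} \leq n - \max_{A \subseteq L}(\card{A} - \card{N(A)})$.

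\emph{Lower bound ($\geq$).} I would use a standard ``dummy vertex'' reduction to the classical Hall's theorem~\cite{Hall35}. Let $d := \max_{A \subseteq L}(\card{A} - \card{N(A)}) \geq 0$ (the nonnegativity comes from taking $A = \emptyset$). Construct $G'$ by adding $d$ new auxiliary vertices to $R$, each adjacent to every vertex of $L$. For any nonempty $A \subseteq L$, we have $\card{N_{G'}(A)} = \card{N_G(A)} + d \geq (\card{A} - d) + d = \card{A}$, and the case $A = \emptyset$ is trivial; so $G'$ satisfies the classical Hall condition on the $L$-side. Applying classical Hall's theorem to $G'$ yields a matching $M'$ of $G'$ saturating all of $L$, so $\card{M'} = n$. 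At most $d$ edges of $M'$ touch dummy vertices, so deleting those edges produces a matching of $G$ of size at least $n - d$, proving $\MM{G} \geq n - \max_{A \subseteq L}(\card{A} - \card{N(A)})$.

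The only real obstacle is the lower bound, and the key observation there is that inflating $R$ by exactly $d$ fully-connected dummy vertices is precisely the correction needed to restore Hall's condition on every subset of $L$; everything else is bookkeeping.
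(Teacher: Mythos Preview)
Your proof is correct. The paper does not actually give a detailed proof of this proposition; it simply remarks that the statement ``follows from Tutte-Berge formula for matching size in general graphs or a simple extension of the proof of Hall's marriage theorem itself,'' and your dummy-vertex reduction is precisely such a simple extension of Hall's theorem.
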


Proposition~\ref{prop:halls-marriage} follows from Tutte-Berge formula for matching size in general graphs~\cite{Tutte47,Berge62} or a simple extension of the proof of Hall's marriage theorem itself 
%%(by adding $n-\MM{G}$ new vertices to $L$ and $R$ separately 
%%and connecting them to all other vertices to ensure that $G$ has a perfect matching and use original Hall's theorem for perfect matchings on this new graph).

\subsection*{Previously Known Properties of the EDCS}
\label{subsec:prelim-edcs}

%%%macros
\newcommand{\propone}{\textnormal{Property~(P1)}\xspace}
\newcommand{\proptwo}{\textnormal{Property~(P2)}\xspace}

\newcommand{\EDCS}[1]{\ensuremath{\textnormal{EDCS\ensuremath{(#1)}}\xspace}}

\renewcommand{\bA}{\ensuremath{\overline{A}}}

\renewcommand{\bB}{\ensuremath{\overline{B}}}
\newcommand{\bd}{\ensuremath{\bar{d}}}

%%%%%%

Recall the definition of an EDCS in Definition \ref{def:edcs}. It is not hard to show that an EDCS always exists as long as $\beta > \beta^-$ (see, e.g.~\cite{AssadiBBMS17}).  For completeness, we repeat the proof in the Appendix \ref{app:edcs-exists}.

\begin{proposition}[cf.~\cite{BernsteinS15,BernsteinS16,AssadiBBMS17}]\label{prop:edcs-exists}
	Any graph $G$ contains an $\EDCS{G,\beta,\beta^-}$ for any parameters $\beta > \beta^-$, which can be found in polynomial time. 
\end{proposition}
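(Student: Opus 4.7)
The plan is to describe a natural greedy fix-up procedure and analyze it via a carefully chosen potential function. Start with $H = (V, \emptyset)$ and repeat the following: if there exists an edge $(u,v) \in E_H$ violating \propone, i.e.\ $\deg_H(u) + \deg_H(v) > \beta$, remove it from $H$; otherwise, if there exists an edge $(u,v) \in E \setminus E_H$ violating \proptwo, i.e.\ $\deg_H(u) + \deg_H(v) < \beta^-$, add it to $H$. When no such edge exists the resulting $H$ is by definition an $\EDCS{G, \beta, \beta^-}$, so existence follows from termination.

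For the polynomial-time bound, I would introduce the potential
\[
\Phi(H) \;:=\; (2\beta - 1)\cdot |E_H| \;-\; \sum_{v \in V} \deg_H(v)^2,
\]
and show that each fix-up step strictly increases $\Phi$ by an integer amount (hence by at least $1$). Concretely, when an edge $(u,v)$ is added with current degrees $d_u := \deg_H(u)$ and $d_v := \deg_H(v)$, the change in $\Phi$ is exactly $(2\beta - 1) - 2(d_u + d_v) - 2$; since we only add when $d_u + d_v \leq \beta^- - 1 \leq \beta - 2$, this is at least $1$. When an edge $(u,v)$ is removed with post-edge degrees satisfying $d_u + d_v \geq \beta + 1$, the change is $-(2\beta - 1) + 2(d_u + d_v) - 2 \geq 1$. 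Combining with the trivial upper bound $\Phi(H) \leq (2\beta - 1)\cdot |E| = O(\beta \cdot n^2)$, the process terminates after polynomially many steps, and each step is easily implementable in polynomial time (e.g., by scanning the edge set).

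The main obstacle, and essentially the only nontrivial step, is picking a potential that simultaneously increases under both additions and removals. A naive choice such as $|E_H|$ or $\sum_v \deg_H(v)$ moves in opposite directions in the two cases, so one must trade off the edge count against a convex function of the degrees. The choice of coefficient $2\beta - 1$ works because it lies strictly between $2\beta^-$ (the threshold needed for the addition case) and $2\beta$ (the threshold needed for the removal case); the fact that $\beta, \beta^-$ are integers with $\beta \geq \beta^- + 1$ is exactly what guarantees such a coefficient exists.
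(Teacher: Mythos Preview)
Your proposal is correct and is essentially identical to the paper's proof: the paper uses the potential $\Phi(H) = (\beta - 1/2)\sum_{u}\deg_H(u) - \sum_{(u,v)\in E_H}(\deg_H(u)+\deg_H(v))$, which, after rewriting $\sum_u \deg_H(u)=2|E_H|$ and $\sum_{(u,v)\in E_H}(\deg_H(u)+\deg_H(v))=\sum_v \deg_H(v)^2$, is exactly your $\Phi$. The only minor difference is the termination bound---the paper obtains $O(n\beta^2)$ by observing that degrees in $H$ never exceed $\beta$ during the process, whereas your cruder $O(\beta n^2)$ bound is still polynomial and suffices.
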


The key property of an EDCS, originally proved in ~\cite{BernsteinS15,BernsteinS16}, is that it contains an almost $(3/2)$-approximate matching. 

\begin{lemma}[\!\cite{BernsteinS15,BernsteinS16}]\label{lem:edcs-matching}
	Let $G(V,E)$ be any graph and $\eps < 1/2$ be a parameter. For parameters $\lambda \leq \frac{\eps}{100}$, $\beta \geq 32\lambda^{-3}$, and $\beta^- \geq (1-\lambda) \cdot \beta$, in any subgraph $H:= \EDCS{G,\beta,\beta^-}$, 
	$\MM{G} \leq \paren{\frac{3}{2} + \eps} \cdot \MM{H}$. 
\end{lemma}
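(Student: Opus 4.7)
My plan is to prove the contrapositive: assuming the maximum matching $M$ in $H$ satisfies $|M|<(2/3-\eps')\,|M^\star|$ for $\eps'=\Theta(\eps)$ (where $M^\star$ is a maximum matching in $G$), I derive a contradiction by combining a symmetric-difference argument with the two EDCS degree constraints P1 and P2. The first step is a standard matching reduction: decomposing $M\oplus M^\star$ into alternating paths and cycles, every augmenting path of length $\ge 5$ contributes at most a $3/2$ blow-up in the $|M^\star|$-to-$|M|$ ratio, so only length-$3$ augmenting paths are ``bad''. A short calculation gives $|M^\star|\le \tfrac{3}{2}|M|+\tfrac{1}{2}\,|A_1|$, where $A_1$ is the set of length-$3$ augmenting paths. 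Thus it suffices to prove $|A_1|\le 2\eps\,|M|$, which is what the rest of the argument targets.

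\textbf{Per-path structure from P1 and P2.} Consider any length-$3$ augmenting path $u_0\!-\!u_1\!=\!u_2\!-\!u_3$, with $(u_0,u_1),(u_2,u_3)\in M^\star$ and $(u_1,u_2)\in M$. Because $M$ is maximum in $H$, not both $M^\star$-edges can lie in $E_H$ (otherwise the path itself would be a length-$3$ augmenting path inside $H$). Say $(u_0,u_1)\notin E_H$. Property P2 then yields $\deg_H(u_0)+\deg_H(u_1)\ge \beta^-\ge (1-\lambda)\beta$, while applying P1 to $(u_1,u_2)\in E_H$ gives $\deg_H(u_1)+\deg_H(u_2)\le\beta$. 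Subtracting these inequalities produces the crucial \emph{transfer} $\deg_H(u_0)\ge \deg_H(u_2)-\lambda\beta$, which lets one push degree from the ``matched'' side of the path to the ``unmatched'' side, with loss only $\lambda\beta$. The analogous statement holds symmetrically if $(u_2,u_3)\notin E_H$ instead, giving $\deg_H(u_3)\ge \deg_H(u_1)-\lambda\beta$.

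\textbf{Global bookkeeping and the contradiction.} Next I would aggregate this per-path transfer over all paths in $A_1$. For every $M$-edge $(u_1,u_2)$ that sits in the middle of a length-$3$ augmenting path, one of its endpoints ``inherits'' a high degree in $H$ from an unmatched $u_0$ or $u_3$, modulo the slack $\lambda\beta$. But the unmatched endpoints $u_0,u_3$ have \emph{all} of their $H$-neighbors inside $V(M)$, since $M$ is maximal in $H$. So their high degree forces many $H$-edges to be packed onto $V(M)$. Combining this with P1 — which caps $\deg_H(x)+\deg_H(y)\le \beta$ for every $M$-edge $(x,y)$ and hence caps the total $H$-degree incident to $V(M)$ by $\beta\cdot |M|$ — gives a hard upper bound on how many length-$3$ augmenting paths can coexist. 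A careful accounting, bounding the $H$-edges incident to $V(M)$ from above by P1 and from below by P2 plus the per-path transfer, forces $|A_1|\le 2\eps|M|$, contradicting the hypothesis $|M|<(2/3-\eps')|M^\star|$.

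\textbf{Main obstacle.} The tricky step is the third one: the per-path transfer yields only a ``one-step'' degree inequality, so naively summing over $A_1$ would double-count or leak through long chains of high-degree vertices. To close the argument one has to iterate the transfer — effectively running a short potential/strata argument of the Bernstein--Stein flavor — and ensure the $\lambda\beta$ slack does not accumulate too much. This is precisely where the polynomial relation $\beta\ge 32\lambda^{-3}$ and $\lambda\le \eps/100$ in the lemma statement come in: the cube-of-$\lambda$ dependence absorbs the slack through the iterated transfer, while $\lambda$ being polynomially small in $\eps$ is what guarantees the final bound $|A_1|\le 2\eps\,|M|$ with the desired $(3/2+\eps)$ ratio.
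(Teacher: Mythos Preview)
Your approach is quite different from the paper's. The paper does not use a symmetric-difference or augmenting-path decomposition at all. For bipartite $G$ it argues directly via the extended Hall theorem: take a Hall witness $A$ for $H$, note that $|\overline{A}|+|N_H(A)|\le\mu(H)$, exhibit a matching of size $\mu(G)-\mu(H)$ between $A$ and $\overline{B}$ lying entirely in $G\setminus H$, and then double-count the $H$-edges incident to that matching using P2 on one side and the degree-balancing Proposition~\ref{prop:edcs-balance} on the other. For general $G$ the paper reduces to the bipartite case by a \emph{random bipartization}: it places the two endpoints of each edge of a fixed maximum matching $M^\star$ on opposite sides, assigns the remaining vertices uniformly, and uses the Lov\'asz Local Lemma to show that with positive probability the induced subgraph $\widetilde H\subseteq H$ is still an EDCS (with slightly perturbed parameters $\widetilde\beta,\widetilde\beta^-$) of the resulting bipartite $\widetilde G$. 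No augmenting paths, no per-path transfer, no iteration.

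Your plan has a genuine gap at precisely the step you yourself flag as the ``main obstacle,'' and the double count you sketch does not close. You want to sandwich $\sum_{v\in V(M)}\deg_H(v)$: the upper bound $\beta\,|M|$ from P1 is fine, but the lower bound coming from the free endpoints does not follow from what you have. The transfer $\deg_H(u_0)\ge\deg_H(u_2)-\lambda\beta$ is vacuous unless $\deg_H(u_2)$ is large, which you never establish; and the only real lower bound available, $\deg_H(u_0)+\deg_H(u_1)\ge\beta^-$ from P2, is useless for the sandwich because $u_1\in V(M)$ and its degree already sits on the upper side --- the two contributions do not separate. Worse, when exactly one $M^\star$-edge of the path lies in $E_H$ --- say $(u_2,u_3)\in E_H$ --- applying P1 to it yields only an \emph{upper} bound on $\deg_H(u_3)$, so you obtain no usable control on the second free endpoint at all. (You also omit the length-$1$ augmenting paths $A_0$, which enter the inequality as $|M^\star|\le\tfrac32|M|+|A_0|+\tfrac12|A_1|$; this is fixable but must be handled.) Declaring that ``one iterates the transfer in the Bernstein--Stein flavor'' is not a proof: that layered iteration \emph{is} the entire technical content of the original argument this paper set out to simplify, and nothing in your setup supplies the potential or strata needed to carry it out.
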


Another particularly useful (technical) property of an EDCS is that it ``balances'' the degree of vertices and their neighbors
in the EDCS; this property is implicit in~\cite{BernsteinS15} but we explicitly state and prove it here as it shows a main distinction in the properties of EDCS compared to more standard (and less robust) subgraphs in this context such as $b$-matchings. 

\begin{proposition}\label{prop:edcs-balance}
	Let $H := \EDCS{G,\beta,\beta^-}$ and $U$ be any subset of vertices. If average degree of $U$ in $H$ is $\bd$ then the average degree of $N_H(U)$ from edges incident on $U$ is at most $\beta-\bd$. 
\end{proposition}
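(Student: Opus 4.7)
The plan is a one-shot averaging argument: apply Property (P1) of the EDCS to every edge of $H$ incident on $U$, sum the resulting inequalities, and bound the $U$-side via Cauchy-Schwarz.

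Concretely, I would work with the multiset of \emph{oriented} edges $\{(u,v) : \{u,v\} \in E_H,\ u \in U\}$ (an edge with both endpoints in $U$ is counted once per orientation). By Property (P1), every oriented $(u,v)$ satisfies $\deg_H(u)+\deg_H(v)\leq \beta$, so summing over the multiset gives
\[
\sum_{(u,v)} \deg_H(u) \;+\; \sum_{(u,v)} \deg_H(v) \;\leq\; \beta \cdot M,
\]
where $M = \sum_{u \in U}\deg_H(u) = \bd \cdot |U|$ is the size of the multiset. Each $u \in U$ appears as a tail in exactly $\deg_H(u)$ oriented pairs, so the first sum equals $\sum_{u \in U}\deg_H(u)^2$. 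By Cauchy-Schwarz this is at least $\bigl(\sum_{u \in U} \deg_H(u)\bigr)^2/|U| = \bd \cdot M$, and rearranging yields $\sum_{(u,v)} \deg_H(v) \leq (\beta-\bd)\cdot M$.

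The final step is to read off the proposition: the second sum can be reindexed as $\sum_{v \in N_H(U)} \deg_H(v)\cdot d_U(v)$, where $d_U(v)$ is the number of edges from $v$ to $U$ in $H$, and $M = \sum_{v \in N_H(U)} d_U(v)$. Dividing the inequality above by $M$ therefore shows that the edge-weighted average of $\deg_H(v)$ over $v \in N_H(U)$ --- i.e., the average degree of $N_H(U)$ from edges incident on $U$ --- is at most $\beta-\bd$.

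I do not expect any serious obstacle here; the only mild care is in bookkeeping edges with both endpoints in $U$, which is exactly why I orient each such edge \emph{both} ways in the multiset. This double-counting is what makes $M = \bd\cdot|U|$ and keeps the tail-count of every $u\in U$ equal to $\deg_H(u)$, which is precisely the identity that the Cauchy-Schwarz step needs. In the common special case where $U$ is independent in $H$ the orientation is canonical and the proof collapses to the three-line bound $\sum_{u \in U}\deg_H(u)^2 \geq \bd\cdot|E_H(U)|$.
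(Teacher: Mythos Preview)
Your mechanics are fine through the inequality $\sum_{(u,v)} \deg_H(v) \leq (\beta - \bd)\cdot M$, but the final step misreads the proposition. The phrase ``average degree of $N_H(U)$ from edges incident on $U$'' means the \emph{unweighted} average, over $v\in N_H(U)$, of the number $d_U(v)$ of $H$-edges from $v$ into $U$ --- equivalently $\card{E_H(U)}/\card{N_H(U)}$. This is exactly what the paper's own proof establishes, and more to the point it is how the proposition is \emph{used} in the proof of Lemma~\ref{lem:bipartite-edcs-matching}: the application reads off the bound $\card{E_H(S)} \leq \card{N_H(S)}\cdot(\beta-\bd)$, which is precisely that unweighted average. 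What you have bounded instead is the $d_U$-weighted average of the \emph{full} $H$-degree $\deg_H(v)$ over $N_H(U)$, a genuinely different quantity that does not directly yield the inequality the application needs.

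The repair is short and keeps your argument intact. From your bound and the trivial observation $\deg_H(v)\geq d_U(v)$ you obtain $\sum_{v\in N_H(U)} d_U(v)^2 \leq (\beta-\bd)\,M$; now apply Cauchy--Schwarz on the $N_H(U)$ side exactly as you did on the $U$ side, getting
\[
\frac{M^2}{\card{N_H(U)}} \;\leq\; \sum_{v\in N_H(U)} d_U(v)^2 \;\leq\; (\beta-\bd)\,M,
\]
and hence $M/\card{N_H(U)}\leq \beta-\bd$, which is the claim (recall $M=\sum_{v\in N_H(U)} d_U(v)$). With this fix your proof is essentially the paper's: the paper simply works from the outset in the subgraph $H'\subseteq H$ of edges incident on $U$ (so that $\deg_{H'}(v)=d_U(v)$ on the $N_H(U)$ side) and applies the convexity step symmetrically to both $U$ and $N_H(U)$ in one shot, rather than doing the $U$ side first and then patching the $N_H(U)$ side afterward.
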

\begin{proof}
	Let $H'$ be a subgraph of $H$  containing the edges incident on $U$. 	Let $W := N_{H'}(U) = N_H(U)$ and $E' = E_H(U,W) = E_{H'}(U,W)$. 
	We are interested in upper bounding the quantity $\card{E'}/\card{W}$. Firstly, by $\propone$ of EDCS, we have that
	$\sum_{(u,v) \in E'} \deg{H'}{u} + \deg{H'}{v} \leq \beta \cdot \card{E'}.$ We write the LHS in this equation as:
	\begin{align*}
	\sum_{(u,v) \in E'} \deg{H'}{u} + \deg{H'}{v} &= \sum_{u \in U} (\deg{H'}{u})^2 + \sum_{w \in W} (\deg{H'}{w})^2 
	\geq \sum_{u \in U} (\frac{\card{E'}}{\card{U}})^2 + \sum_{w \in W} (\frac{\card{E'}}{\card{W}})^2 \tag{as $\sum_{u} \deg{H'}{u} = \sum_{w}\deg{H'}{w} = \card{E'}$ and each is minimized when the summands are equal.} \\
	&= \card{E'} \cdot \paren{\bd + \card{E'}/\card{W}}.
	\end{align*}
	\noindent
	By plugging in this bound in LHS above, we obtain $\card{E'}/\card{W} \leq \beta - \bd$, finalizing the proof. 
\end{proof}

\section{A Simpler Proof of the Key Property of an EDCS}\label{sec:edcs}

In this section we provide a much simpler proof of the key property that an EDCS contains an almost $(3/2)$-approximate matching.
This lemma was previously used in~\cite{BernsteinS15,BernsteinS16,AssadiBBMS17}.
Our proof is self-contained to this section, and for general graphs, our new proof even improves the dependence of $\beta$ on parameter $\lambda$
from $1/\lambda^3$ to (roughly) $1/\lambda^2$, thus allowing for an even sparser EDCS.

The proof contains two steps. We first give a simple and streamlined proof that an EDCS contains a $(3/2)$-approximate matching in bipartite graphs. Our proof 
in this part is similar to~\cite{BernsteinS15} but instead of modeling matchings as flows and using cut-flow duality, we directly work with matchings by using Hall's theorem. The main part of the proof however is to extend this result to general graphs. 
For this, we give a simple reduction that extends the result on bipartite graphs to general graphs by taking advantage of the ``robust'' nature of EDCS. This allows us to bypass the complicated 
arguments in~\cite{BernsteinS16} specific to non-bipartite graphs and to obtain the result directly from the one for bipartite graphs (the paper of \cite{BernsteinS16} explicitly acknowledges the complexity of the proof and asks for a more ``natural" approach).

\subsection*{A Slightly Simpler Proof for Bipartite Graphs}
%\label{subsec:edcs-proof-bipartite}
Our new proof should be compared to Lemma 2 in Section 4.1 of the Arxiv version of \cite{BernsteinS15}.

\begin{lemma}\label{lem:bipartite-edcs-matching}
	Let $G(L,R,E)$ be any bipartite graph and $\eps < 1/2$ be a parameter. For $\lambda \leq \frac{\eps}{4}$, $\beta \geq 2\lambda^{-1}$, and $\beta^- \geq (1-\lambda) \cdot \beta$, in any subgraph $H:= \EDCS{G,\beta,\beta^-}$, 
	$\MM{G} \leq \paren{\frac{3}{2} + \eps} \cdot \MM{H}$. 
\end{lemma}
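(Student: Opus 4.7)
The plan is to proceed by contradiction: suppose $\MM{G} > (3/2 + \eps)\MM{H}$. I first apply the extended Hall's marriage theorem (Proposition~\ref{prop:halls-marriage}) to $H$; since the proposition permits the witness on either side of the bipartition, WLOG (after swapping $L$ and $R$ if necessary) the witness $A$ lies in $L$, so $|A| - |N_H(A)| = n - \MM{H}$. Write $B := N_H(A)$, $\bA := L \setminus A$, $\bB := R \setminus B$, so that $\MM{H} = |\bA| + |B|$ and (by K\H{o}nig's theorem) $\bA \cup B$ is a minimum vertex cover of $H$. Since $H$ has no edges between $A$ and $\bB$ by definition of $B$, extending $\bA \cup B$ to a vertex cover of $G$ requires only a vertex cover of the bipartite graph $G[A,\bB]$, whose edges all lie in $E \setminus E_H$; letting $\nu$ denote the maximum matching of $G[A,\bB]$, K\H{o}nig's theorem then gives $\MM{G} \leq \MM{H} + \nu$. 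Under the contradiction hypothesis, $\nu > (1/2 + \eps)\MM{H}$, which I now aim to refute.

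The key structural consequence of \proptwo is that every edge $(a,v)$ in $G[A,\bB]$ satisfies $\deg{H}{a} + \deg{H}{v} \geq \beta^-$. Summing this inequality over the edges of a maximum matching of $G[A,\bB]$ and using that $A$'s $H$-edges all go to $B$ while $\bB$'s $H$-edges all go to $\bA$, we obtain
\[
\beta^- \cdot \nu \;\leq\; |E_H(A, B)| + |E_H(\bA, \bB)|.
\]
The rest of the proof bounds the right-hand side by roughly $(\beta^-/2)(|\bA| + |B|) = (\beta^-/2)\MM{H}$.

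Applying Proposition~\ref{prop:edcs-balance} directly to $A$ and separately to $\bB$ only yields bounds of the form $|E_H(A,B)| \leq \beta|A||B|/(|A|+|B|)$ and $|E_H(\bA,\bB)| \leq \beta|\bA||\bB|/(|\bA|+|\bB|)$, whose sum can be as large as $\beta(|\bA|+|B|)$ in unbalanced cases and thus only yields a $2$-approximation. The factor-of-$2$ improvement comes from partitioning the edges of the $\nu$-realizing matching by which endpoint is \emph{heavy} (has $H$-degree $\geq \beta^-/2$), which by \proptwo is at least one for every such edge. Let $A^+ := \{a \in A : \deg{H}{a} \geq \beta^-/2\}$; by \propone, every $H$-neighbor of $A^+$ in $B$ has $H$-degree at most $\beta - \beta^-/2 \approx \beta/2$, and applying Proposition~\ref{prop:edcs-balance} to $A^+$ then yields $|A^+| \leq (1 + O(\lambda))|B|$. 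Symmetrically, the heavy subset $\bB^+ \subseteq \bB$ satisfies $|\bB^+| \leq (1 + O(\lambda))|\bA|$. A careful accounting that treats matching edges with both endpoints heavy separately from those with only one heavy endpoint---exploiting the double-counting of the former between $A^+$ and $\bB^+$---yields the desired $\nu \leq (1/2 + \eps)\MM{H}$.

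The main obstacle is precisely this final factor-of-$2$ improvement. Applying Proposition~\ref{prop:edcs-balance} to $A$ and $\bB$ themselves inherits the $2$-approximation from the naive additive bound, and extracting the sharper $(3/2)$-factor requires invoking \propone and \proptwo together through the degree threshold $\beta^-/2$, so that edges with two heavy endpoints contribute to both $|A^+|$ and $|\bB^+|$ and recover the missing factor.
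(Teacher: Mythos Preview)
Your setup matches the paper's: the Hall-witness decomposition, the identification of $\bA \cup B$ as a vertex cover of $H$ of size $\MM{H}$, and the matching of size $\nu \geq \MM{G} - \MM{H}$ between $A$ and $\bB$ (whose edges all lie in $G \setminus H$) are exactly what the paper uses. The gap is in your final accounting, which only establishes a $2$-approximation.

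Your heavy/light argument correctly gives $|A^+| \leq (1+O(\lambda))|B|$ and $|\bB^+| \leq (1+O(\lambda))|\bA|$, hence $|A^+| + |\bB^+| \leq (1+O(\lambda))\MM{H}$. But writing $\nu_1,\nu_2,\nu_3$ for the matching edges with only the $A$-endpoint heavy, only the $\bB$-endpoint heavy, and both heavy, the bounds $\nu_1+\nu_3 \leq |A^+|$ and $\nu_2+\nu_3 \leq |\bB^+|$ yield only $\nu + \nu_3 \leq (1+O(\lambda))\MM{H}$. The ``double-counting'' you invoke recovers merely the extra $\nu_3$; when $\nu_3=0$ (every matching edge has exactly one heavy endpoint, which \proptwo certainly allows) you get no better than $\nu \leq (1+O(\lambda))\MM{H}$. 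Your intermediate goal $|E_H(A,B)|+|E_H(\bA,\bB)| \leq (\beta^-/2)\MM{H}$ is likewise too strong: already the trivial bound via the $B$-side degrees gives only $|E_H(A,B)| \leq \beta\,|B|$, a factor of two off.

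The paper avoids thresholding entirely and applies Proposition~\ref{prop:edcs-balance} once, directly to the set $S$ of all $2\nu$ matched endpoints (heavy and light together). Since $\sum_{v\in S}\deg{H}{v} \geq \nu\beta^-$, the average $H$-degree of $S$ is at least $\beta^-/2$, so the proposition bounds the average degree of $N_H(S)$ (coming from $S$) by $\beta - \beta^-/2 \leq (1+\lambda)\beta/2$. Because $N_H(S) \subseteq \bA \cup B$ has size at most $\MM{H}$, double-counting $|E_H(S)|$ gives $\nu\beta^- \leq |E_H(S)| \leq \MM{H}\cdot(1+\lambda)\beta/2$, which is the desired $\nu \leq (1/2+O(\lambda))\MM{H}$. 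The missing factor of two in your approach is precisely $|S| = 2\nu$: by keeping both endpoints of every matching edge in play (rather than charging each edge to a single heavy endpoint), the convexity inside Proposition~\ref{prop:edcs-balance} does the work that your heavy/light partition cannot.
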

\begin{proof}
	Fix any $H:=\EDCS{G,\beta,\beta^-}$ and let $A$ be any of its witness sets in extended Hall's marriage theorem of Proposition~\ref{prop:halls-marriage} and $B:= N_H(A)$. Without loss of generality, let us assume $A$ is a subset of $L$. 
	Define $\bA := L \setminus A$, $\bB := R \setminus B$ (see Figure~\ref{fig:bipartite-edcs-matching}). By Proposition~\ref{prop:halls-marriage}, 
	\begin{align}
	\card{\bA} + \card{B} = n-(\card{A}-\card{B}) \leq n-(n-\MM{H}) = \MM{H}. \label{eq:H-matching}
	\end{align}
	On the other hand, since $G$ has a matching of size $\MM{G}$, we need to have a matching $M$ of size $(\MM{G}-\MM{H})$ between $A$ and $\bB$ as otherwise by Proposition~\ref{prop:halls-marriage}, $A$ would be a witness set in $G$ that implies the
	maximum matching of $G$ is smaller than $\MM{G}$ (to see why the set of edges between $A$ and $\bB$ is a matching simply apply Proposition~\ref{prop:halls-marriage} to a
	subgraph of $G$ containing only a maximum matching of $G$). Let $S \subseteq A \cup \bB$ be the end points of this matching (see Figure~\ref{fig:bipartite-edcs-matching}). 
	As edges in $M$ are all missing from $H$, by $\proptwo$ of EDCS $H$, we have that, 
	\begin{align}
		\sum_{v \in S} \deg{H}{v}  = \sum_{(u,v) \in M} (\deg{H}{u} + \deg{H}{v}) \geq (\MM{G}-\MM{H}) \cdot \beta^-. \label{eq:H-edge}
	\end{align}
	Consequently, as $\card{S} = 2(\MM{G} - \MM{H})$, the average degree of $S$ is $\geq \beta^-/2$. As such, by Proposition~\ref{prop:edcs-balance}, the average degree of 
	of $N_H(S)$ (from $S$) is at most $\beta - \beta^-/2 \leq (1+\lambda) \beta/2$. Finally, note that $N_H(S) \subseteq \bA \cup B$ as there are no edges between $A$ and $\bB$ in $H$, and hence by Eq~(\ref{eq:H-matching}), $\card{N_H(S)} \leq \MM{H}$. 
	By double counting the number of edges between $S$ and $N_H(S)$, i.e., $E_H(S)$: 
	\begin{align*}
	\card{E_H(S)} &\geq \card{S} \cdot \beta^-/2 \geq 2(\MM{G} - \MM{H}) \cdot \beta^-/2, \\
	\card{E_H(S)} &\leq \card{N_H(S)} (1+\lambda)\beta/2 \leq \MM{H} \cdot (1+\lambda)\beta/2.
	\end{align*}
	This implies that, 
	\begin{align*}
		2\MM{G} \leq 2\MM{H} + \MM{H} \cdot (1+\lambda)(\beta/2) \cdot (2/\beta^-) \leq 3\MM{H} \cdot \frac{1+\lambda}{1-\lambda} \leq 3\MM{H} (1+\eps).
	\end{align*}
	Reorganizing the terms above, finalizes the proof. 
\end{proof}

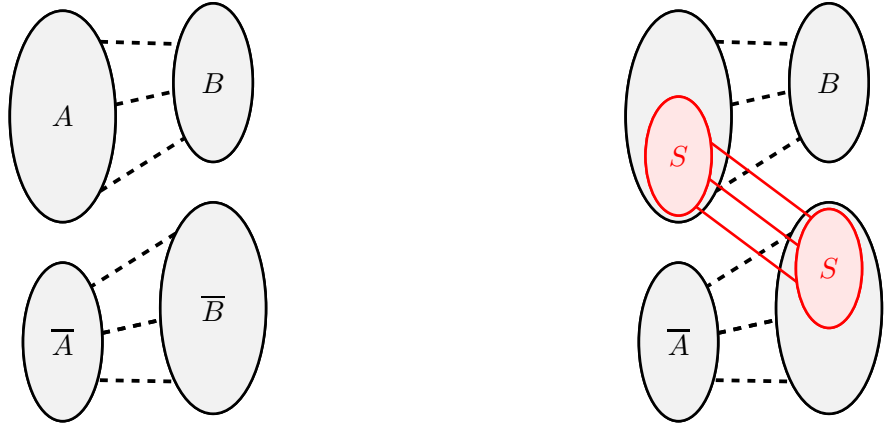
\begin{figure}[t]
    \centering
    \subcaptionbox{$A$ and $B:= N_H(A)$ form a Hall's theorem witness set in the EDCS $H$ and $\card{\bA \cup B} \leq \MM{H}$. 
   }[0.45\textwidth]{
    \begin{tikzpicture}[ auto ,node distance =1cm and 2cm , on grid , semithick , state/.style ={ circle ,top color =white , bottom color = white , draw, black , text=black}, every node/.style={inner sep=0,outer sep=0}]
% Dashed lines are between sets of vertices that can have an edge in $H$.

\node[ellipse, draw, black, fill=black!5, text=black, minimum height=80pt, minimum width=40pt, line width=1pt](A){$A$};

\node[ellipse, draw, black, fill=black!5, text=black, minimum height=60pt, minimum width=30pt, line width=1pt]  (bA) [below = 3cm of A]{$\bA$};

\node[ellipse, draw, black, fill=black!5, text=black, minimum height=60pt, minimum width=30pt, line width=1pt] (B) [above right = 0.45cm and 2cm of A]{$B$};

\node[ellipse, draw, black, fill=black!5, text=black, minimum height=80pt, minimum width=40pt, line width=1pt] (bB) [above right= 0.45cm and 2cm of bA]{$\bB$};

%%%%%%%%%%%%%%%%

\draw[line width=0.5mm, dashed, black] (A) to (B);

\node (abB) [above=0.5cm of B]{};
\draw[line width=0.5mm, dashed, black] (A.north east) to (abB);

\node (beB) [below=0.5cm of B]{};
\draw[line width=0.5mm, dashed, black] (A.south east) to (beB);

%%%%%%%%%%%%%%%%

\draw[line width=0.5mm, dashed, black] (bA) to (bB);

\node (abA) [above=0.5cm of bA]{};
\draw[line width=0.5mm, dashed, black] (bB.north west) to (abA);

\node (beA) [below=0.5cm of bA]{};
\draw[line width=0.5mm, dashed, black] (bB.south east) to (beA);

\draw[line width=0.5mm, dashed, black] (bA) to (bB);

%%%%%%%%%%%%%%%%

\node[ellipse, draw, black, fill=black!5, text=black, minimum height=80pt, minimum width=40pt, line width=1pt](A){$A$};

\node[ellipse, draw, black, fill=black!5, text=black, minimum height=60pt, minimum width=30pt, line width=1pt]  (bA) [below = 3cm of A]{$\bA$};

\node[ellipse, draw, black, fill=black!5, text=black, minimum height=60pt, minimum width=30pt, line width=1pt] (B) [above right = 0.45cm and 2cm of A]{$B$};

\node[ellipse, draw, black, fill=black!5, text=black, minimum height=80pt, minimum width=40pt, line width=1pt] (bB) [above right= 0.45cm and 2cm of bA]{$\bB$};

\end{tikzpicture}
    
    }
    \hspace{0.5cm}
    \subcaptionbox{There is a matching of size $\MM{G}-\MM{H}$ between $A$ and $\bB$ (i.e., the set $S$) in $G \setminus H$. 
     }[0.45\textwidth]{
        \begin{tikzpicture}[ auto ,node distance =1cm and 2cm , on grid , semithick , state/.style ={ circle ,top color =white , bottom color = white , draw, black , text=black}, every node/.style={inner sep=0,outer sep=0}]
%In $H$, we have $N_H(S) \subseteq \bA \cup B$ as shown by black dashed lines.
\node[ellipse, draw, black, fill=black!5, text=black, minimum height=80pt, minimum width=40pt, line width=1pt](A){$A$};

\node[ellipse, draw, black, fill=black!5, text=black, minimum height=60pt, minimum width=30pt, line width=1pt]  (bA) [below = 3cm of A]{$\bA$};

\node[ellipse, draw, black, fill=black!5, text=black, minimum height=60pt, minimum width=30pt, line width=1pt] (B) [above right = 0.45cm and 2cm of A]{$B$};

\node[ellipse, draw, black, fill=black!5, text=black, minimum height=80pt, minimum width=40pt, line width=1pt] (bB) [above right= 0.45cm and 2cm of bA]{$\bB$};

%%%%%%%%%%%%%%%%

\draw[line width=0.5mm, dashed, black] (A) to (B);

\node (abB) [above=0.5cm of B]{};
\draw[line width=0.5mm, dashed, black] (A.north east) to (abB);

\node (beB) [below=0.5cm of B]{};
\draw[line width=0.5mm, dashed, black] (A.south east) to (beB);

%%%%%%%%%%%%%%%%

\draw[line width=0.5mm, dashed, black] (bA) to (bB);

\node (abA) [above=0.5cm of bA]{};
\draw[line width=0.5mm, dashed, black] (bB.north west) to (abA);

\node (beA) [below=0.5cm of bA]{};
\draw[line width=0.5mm, dashed, black] (bB.south east) to (beA);

\draw[line width=0.5mm, dashed, black] (bA) to (bB);

%%%%%%%%%%%%%%%%

\node[ellipse, draw, black, fill=black!5, text=black, minimum height=80pt, minimum width=40pt, line width=1pt](A){$A$};

\node[ellipse, draw, black, fill=black!5, text=black, minimum height=60pt, minimum width=30pt, line width=1pt]  (bA) [below = 3cm of A]{$\bA$};

\node[ellipse, draw, black, fill=black!5, text=black, minimum height=60pt, minimum width=30pt, line width=1pt] (B) [above right = 0.45cm and 2cm of A]{$B$};

\node[ellipse, draw, black, fill=black!5, text=black, minimum height=80pt, minimum width=40pt, line width=1pt] (bB) [above right= 0.45cm and 2cm of bA]{$\bB$};

%%%%%%%%%

\node[ellipse,  draw,  red, fill=red!10,  minimum height=45pt, minimum width = 25pt, rounded corners=2mm, line width=1pt]  (SL) [below = 15pt of A]{$S$};

\node[ellipse, draw, red, fill=red!10, minimum height=45pt, minimum width = 25pt, rounded corners=2mm, line width=1pt]  (SR) [above = 15pt of bB]{$S$};

\node[state,white](x1) [above=0.5cm of SL] {};
\node[state,white] (y1) [above=0.5cm of SR] {};

\draw[red, line width=1pt](x1) to (y1);

\node[state,white] (x2) [below=0.5cm of x1] {};
\node[state,white] (y2) [below=0.5cm of y1] {};

\draw[red, line width=1pt] (x2) to (y2);

\node[state,white] (x3) [below=0.5cm of x2] {};
\node[state,white] (y3) [below=0.5cm of y2] {};

\draw[red, line width=1pt] (x3) to (y3);

\node[ellipse,  draw,  red, fill=red!10,  minimum height=45pt, minimum width = 25pt, rounded corners=2mm, line width=1pt]  (SL) [below = 15pt of A]{$S$};

\node[ellipse, draw, red, fill=red!10, minimum height=45pt, minimum width = 25pt, rounded corners=2mm, line width=1pt]  (SR) [above = 15pt of bB]{$S$};

\end{tikzpicture}

    }
    \caption{The partitioning of vertices used in the proof of Lemma~\ref{lem:bipartite-edcs-matching}.}
    \label{fig:bipartite-edcs-matching}
\end{figure}

\subsection*{A Much Simpler Proof for Non-bipartite Graphs}
%\label{subsec:edcs-proof-general}

%We now use Lemma~\ref{lem:bipartite-edcs-matching} to prove that even on general graphs, an EDCS contains an (almost) $(3/2)$-approximate matching. 
Our new proof in this part should be compared to Lemma 5.1 on page 699 in~\cite{BernsteinS16}: see Appendix B of their paper for the full proof, as well Section 4 for an additional auxiliary claim needed. % required for their proof. 

\begin{lemma}\label{lem:general-edcs-matching}
	Let $G(V,E)$ be any graph and $\eps < 1/2$ be a parameter. For $\lambda \leq \frac{\eps}{32}$, $\beta \geq 8\lambda^{-2}\log{(1/\lambda)}$, and $\beta^- \geq (1-\lambda) \cdot \beta$, in any subgraph $H:= \EDCS{G,\beta,\beta^-}$, 
	$\MM{G} \leq \paren{\frac{3}{2} + \eps} \cdot \MM{H}$. 
\end{lemma}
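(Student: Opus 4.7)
The plan is to reduce the general case to the already-proven bipartite case Lemma~\ref{lem:bipartite-edcs-matching} via a random bipartition of $V$ that turns $G$ into a bipartite graph $\tG$ without losing its maximum matching, and turns $H$ into a bipartite graph $\tH$ that is (almost) an EDCS of $\tG$. Fix a maximum matching $M^*$ of $G$. For each edge $(x,y)\in M^*$, flip an independent fair coin to place one endpoint in $L$ and the other in $R$; for each $M^*$-unmatched vertex, assign it to $L$ or $R$ by its own independent fair coin. Let $\tG$ be the bipartite subgraph of $G$ consisting of edges crossing between $L$ and $R$, and set $\tH := H \cap \tG$. By construction $M^* \subseteq \tG$, so $\MM{\tG} \ge \MM{G}$, and trivially $\MM{\tH} \le \MM{H}$; hence it suffices to show $\MM{\tG} \le (3/2 + \eps)\MM{\tH}$.

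The technical heart is to show that with positive probability every vertex $v\in V$ satisfies $\card{\deg{\tH}{v} - \deg{H}{v}/2} \le \lambda\beta$; let $\event_v$ denote the complementary bad event. Conditioning on $v$'s side, each $H$-neighbor of $v$ lies on the opposite side independently with probability $1/2$ (the single exception being $v$'s $M^*$-partner, if it happens to be an $H$-neighbor, which is opposite deterministically). Thus $\deg{\tH}{v}$ is a sum of $\deg{H}{v} - O(1)$ independent Bernoulli$(1/2)$ variables, and the additive Chernoff bound (Proposition~\ref{prop:chernoff}) combined with $\deg{H}{v} \le \beta$ yields
\[ \Pr(\event_v) \le 2\exp(-2\lambda^2\beta) \le 2\lambda^{16},\]
where the last step uses $\beta \ge 8\lambda^{-2}\log(1/\lambda)$. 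Each $\event_v$ is determined by the coins assigned to vertices within $H$-distance $2$ of $v$ (since each coin corresponds either to a single vertex or to an $M^*$-edge, and any $M^*$-edge touching $N_H(v)\cup\{v\}$ involves only vertices at $H$-distance $\le 2$), and there are at most $\beta^2$ such vertices. Proposition~\ref{prop:lll} applied with $p \le 2\lambda^{16}$ and $d \le \beta^2 \lesssim \lambda^{-4}\log^2(1/\lambda)$ gives $p(d+1) < 1/e$, so $\cap_v \overline{\event_v}$ has positive probability and a valid partition exists.

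Fix such a partition and set $\tbeta := \beta/2 + 2\lambda\beta$ and $\tbeta^- := \beta^-/2 - 2\lambda\beta$. For every $(u,v)\in\tH\subseteq H$, Property (P1) of $H$ yields $\deg{\tH}{u}+\deg{\tH}{v} \le (\deg{H}{u}+\deg{H}{v})/2 + 2\lambda\beta \le \tbeta$; for every $(u,v)\in\tG\setminus\tH \subseteq G\setminus H$, Property (P2) of $H$ yields $\deg{\tH}{u}+\deg{\tH}{v} \ge (\deg{H}{u}+\deg{H}{v})/2 - 2\lambda\beta \ge \tbeta^-$. Hence $\tH$ is an $\EDCS{\tG,\tbeta,\tbeta^-}$. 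A short calculation with $\beta^- \ge (1-\lambda)\beta$ shows $\tbeta^- \ge (1 - O(\lambda))\tbeta$ and $\tbeta \gg \lambda^{-1}$, so Lemma~\ref{lem:bipartite-edcs-matching} applies (the hypothesis $\lambda \le \eps/32$ providing the required slack) and gives $\MM{\tG} \le (3/2+\eps)\MM{\tH}$, finishing the proof.

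The main obstacle is the concentration step: a naive union bound over $v$ would force $\beta = \Omega(\log n)$, defeating the purpose of a sparse EDCS whose parameters must be independent of $n$. The fix is the locality observation that each $\event_v$ depends only on the $H$-distance-$2$ ball around $v$, which then lets the Lovasz Local Lemma close the argument with $\beta$ independent of $n$. This locality trick is precisely what forces the hypothesis on $\beta$ to degrade from $2\lambda^{-1}$ in the bipartite case to $\Theta(\lambda^{-2}\log(1/\lambda))$ in the general case; everything else is routine bookkeeping on $(1\pm O(\lambda))$ factors.
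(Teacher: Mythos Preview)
Your proposal follows essentially the same route as the paper: randomly bipartition $V$ so that $M^*$ crosses, show via Chernoff plus the Lov\'asz Local Lemma that with positive probability every vertex has $\deg{\tH}{v}$ within $\lambda\beta$ of $\deg{H}{v}/2$, conclude that $\tH$ is an EDCS of $\tG$ with slightly relaxed parameters, and then invoke Lemma~\ref{lem:bipartite-edcs-matching}.

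There is one genuine bookkeeping slip in your LLL verification. You write $d \le \beta^2 \lesssim \lambda^{-4}\log^2(1/\lambda)$, but the hypothesis of the lemma gives only a \emph{lower} bound on $\beta$; for large $\beta$ the dependency degree $\beta^2$ can be arbitrarily large, so pairing it with the $\beta$-independent bound $p \le 2\lambda^{16}$ does not yield $p(d+1) < 1/e$. The fix is to keep the $\beta$-dependent Chernoff bound $p \le 2\exp(-2\lambda^2\beta)$: since $\beta \ge 8\lambda^{-2}\log(1/\lambda)$ implies $\beta \ge 2\lambda^{-2}\log\beta$, one gets $p \le \beta^{-4}$ and hence $p(d+1) \le O(\beta^{-2}) < 1/e$, exactly as the paper does. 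A second, smaller wrinkle: your claim that the coins determining $\event_v$ all lie within $H$-distance $2$ of $v$ is not correct, because the $M^*$-partner of some $u \in N_H(v)$ may sit at arbitrary $H$-distance from $v$. The conclusion $d = O(\beta^2)$ is nevertheless right, via the count the paper uses: $\event_v$ depends on the $\le \beta$ coins governing the sides of $N_H(v)$, each coin is shared by at most two vertices, and each such vertex lies in $N_H(v')$ for at most $\beta$ choices of $v'$.
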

\begin{proof}
	The proof is based on the probabilistic method and Lovasz Local Lemma. Let $\Mstar$ be a maximum matching of size $\MM{G}$ in $G$. Consider the following randomly chosen bipartite subgraph $\tG(L,R,\tE)$ of $G$ with respect to $\Mstar$, where
	$L \cup R = V$: 
	\begin{itemize}
		\item For any edge $(u,v) \in \Mstar$, with probability $1/2$, $u$ belongs to $L$ and $v$ belongs to $R$, and with probability $1/2$, the opposite (the choices between different edges of $\Mstar$ are independent). 
		\item For any vertex $v \in V$ not matched by $\Mstar$, we assign $v$ to $L$ or $R$ uniformly at random (again, the choices are independent across vertices). 
		\item The set of edges in $\tE$ are all edges in $E$ with one end point in $L$ and the other one in $R$. 
	\end{itemize}
	
	Define $\tH := H \cap \tG$. We argue that as $H$ is an EDCS for $G$, $\tH$ also remains an EDCS for $\tG$ with non-zero probability. Formally, 
	\begin{claim}\label{clm:tH-is-EDCS}
		$\tH$ is an $\EDCS{\tG,\tbeta,\tbeta^-}$ for $\tbeta = (1+4\lambda) \beta/2$ and $\tbeta^- = (1-5\lambda)\beta^-/2$ with probability strictly larger than zero (over the randomness of $\tG$).   
	\end{claim}
	
	Before we prove Claim~\ref{clm:tH-is-EDCS}, we argue why it implies Lemma~\ref{lem:general-edcs-matching}. Let $\tG$ be chosen such that $\tH$ is an $\EDCS{\tG,\tbeta,\tbeta^-}$ for parameters in Claim~\ref{clm:tH-is-EDCS} (by Claim~\ref{clm:tH-is-EDCS}, such a choice of $\tG$ always exist). By construction of $\tG$, $\Mstar \subseteq \tE$ and hence $\MM{\tG} = \MM{G}$. On the other hand, $\tG$ is now a bipartite 
	graph and $\tH$ is its EDCS with appropriate parameters. We can hence apply Lemma~\ref{lem:bipartite-edcs-matching} and obtain that $\MM{\tG} \leq (3/2 + \eps) \MM{\tH}$. As $\tH \subseteq H$, $\MM{\tH} \leq \MM{H}$,
	and hence $(\MM{\tG}=)\MM{G} \leq (3/2 + \eps)\MM{H}$, proving the assertion in the lemma statement. It thus only remains to prove Claim~\ref{clm:tH-is-EDCS}. 
	
	\begin{proof}[Proof of Claim~\ref{clm:tH-is-EDCS}]
		Fix any vertex $v \in V$, let $d_v := \deg{H}{v}$ and $N_H(v):= \set{u_1,\ldots,u_{d_v}}$ be the neighbors of $v$ in $H$. Let us assume $v$ is chosen in $L$ in $\tG$ (the other case is symmetric). Hence, degree of $v$ in $\tH$ is 
		exactly equal to the number of vertices in $N_H(v)$ that are chosen in $R$. As such, by construction of $\tG$, $\Ex\bracket{\deg{\tH}{v}} = d_v/2$ ($+1$ iff $v$ is incident on $\Mstar \cap H$). Moreover, if two vertices $u_i,u_j$ in $N_H(v)$ are matched by $\Mstar$, then
		exactly one of them appears as a neighbor to $v$ in $\tH$ and otherwise the choices are independent. Hence, by Chernoff bound (Proposition~\ref{prop:chernoff}), 
		\begin{align*}
			\Pr\paren{\card{\deg{\tH}{v} - d_v/2} \geq \lambda \cdot \beta} \leq \exp\paren{-\frac{2\lambda^2 \cdot \beta^2}{\beta}} \leq \exp\paren{-4\log{\beta}} \leq \frac{1}{\beta^4} \tag{as $\beta \geq 8\lambda^{-2}\log{(1/\lambda)}$ and 
			hence $\beta \geq 2\lambda^{-2}\cdot\log{\beta}$}. 
		\end{align*}
		
		Define $\event_v$ as the event that $\card{\deg{\tH}{v} - d_v/2} \geq \lambda \cdot \beta$. Note that $\event_v$ depends only on the choice of vertices in $N_H(v)$ and hence can depend on at most $\beta^2$ other events $\event_u$ 
		for vertices $u$ which are neighbors to $N_H(v)$ (recall that for all $u \in V$, $\deg{H}{u} \leq \beta$ in $H$ by $\propone$ of EDCS). As such, we can apply Lovasz Local Lemma (Proposition~\ref{prop:lll}) to argue that 
		with probability strictly more than zero, $\cap_{v \in V} \overline{\event_v}$ happens. In the following, we condition on this event and argue that in this case, $\tH$ is an EDCS of $\tG$ with appropriate parameters. To do this, we only 
		need to prove that both $\propone$ and $\proptwo$ hold for the EDCS $\tH$ (with the choice of $\tbeta$ and $\tbeta^-$).  
		
		We first prove $\propone$ of EDCS $\tH$. Let $(u,v)$ be any edge in $\tH$. 
		By events $\overline{\event_v}$ and $\overline{\event_u}$, 
		\begin{align*}
			\deg{\tH}{u} + \deg{\tH}{v} \leq \frac{1}{2} \cdot \paren{\deg{H}{u} + \deg{H}{v}} + 2\lambda \beta \leq \beta/2 + 2\lambda\beta = (1+4\lambda)\cdot \beta/2,
		\end{align*}
		where the second inequality is by \propone of EDCS $H$ as $(u,v)$ belongs to $H$ as well. We now prove $\proptwo$ of EDCS $\tH$. Let $(u,v)$ be any edge in $\tG \setminus \tH$. Again, by 
		$\overline{\event_v}$ and $\overline{\event_u}$, 
		\begin{align*}
			\deg{\tH}{u} + \deg{\tH}{v} \geq \frac{1}{2} \cdot \paren{\deg{H}{u} + \deg{H}{v}} - 2\lambda \beta \geq \beta^-/2 - 2\lambda(1-\lambda)\beta^- \geq (1-5\lambda)\cdot \beta/2,
		\end{align*}
		where the second inequality is by \proptwo of EDCS $H$ as $(u,v) \in G \setminus H$.  \Qed{Claim~\ref{clm:tH-is-EDCS}}
		
	\end{proof}
	\noindent
	Lemma~\ref{lem:general-edcs-matching} now follows immediately from Claim~\ref{clm:tH-is-EDCS} as argued above. 	\Qed{Lemma~\ref{lem:general-edcs-matching}}
	
\end{proof}

\section{One-Way Communication Complexity of Matching}\label{sec:communication}

In the one-way communication model, Alice and Bob are given graphs $G_A(V,E_A)$ and $G_B(V,E_B)$, respectively, and the goal is for Alice to send a small message to Bob
such that Bob can output a large approximate matching in $E_A \cup E_B$. In this section, we show that if Alice communicates an appropriate EDCS of $G_A$, then Bob is able to output an almost $(3/2)$-approximate matching. 

\begin{theorem}[Formalizing Result~\ref{res:communication}]\label{thm:communication}
	There exists a deterministic poly-time one-way communication protocol that given any $\eps > 0$, computes a $(3/2+\eps)$-approximation to maximum matching using $O(\frac{n \cdot \log{(1/\eps)}}{\eps^2})$ communication from Alice to Bob. 
\end{theorem}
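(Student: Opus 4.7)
The plan is to follow the general strategy outlined in Section~\ref{sec:results}: Alice computes an EDCS of her graph and ships it to Bob, who then outputs a maximum matching in the union of the EDCS and his own edges. The challenge is that this union need not itself be an EDCS of $G_A \cup G_B$---Bob's edges can arbitrarily inflate the degrees of certain vertices in $H$---so the analysis will instead exhibit a pair of subgraphs $\tH \subseteq \tG \subseteq G_A \cup G_B$ for which the EDCS property \emph{does} hold and whose matching size equals that of $G_A \cup G_B$.

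First, I would have Alice compute $H := \EDCS{G_A, \beta, \beta-1}$ with $\beta = \Theta(\eps^{-2}\log(1/\eps))$; such an $H$ exists and is computable in polynomial time by Proposition~\ref{prop:edcs-exists}, and by definition has $O(n\beta) = O(n\eps^{-2}\log(1/\eps))$ edges, matching the claimed communication bound. Alice sends the edges of $H$ to Bob, and Bob returns an exact maximum matching of $H \cup G_B$.

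For the analysis, fix any maximum matching $\Mstar$ of $G^* := G_A \cup G_B$ and set $M := \Mstar \cap E_B$. Define $\tG := G_A \cup M$ and $\tH := H \cup M$. The key observation is that because $M$ is a matching, each vertex gains at most one new incident edge when passing from $H$ to $\tH$. Hence for every $(u,v) \in \tH$ we have $\deg{\tH}{u} + \deg{\tH}{v} \leq (\deg{H}{u} + \deg{H}{v}) + 2 \leq \beta + 2$, verifying \propone with parameter $\tbeta := \beta + 2$. For \proptwo, any edge in $\tG \setminus \tH$ necessarily lies in $G_A \setminus H$, so $\deg{H}{u} + \deg{H}{v} \geq \beta - 1$ already; degrees only grow in $\tH$, so $\tbeta^- := \beta - 1$ suffices. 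Thus $\tH$ is an $\EDCS{\tG, \beta+2, \beta-1}$, and moreover $\MM{\tG} = \MM{G^*}$ because $\Mstar \subseteq \tG$.

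The theorem then follows by invoking Lemma~\ref{lem:general-edcs-matching} on the pair $(\tG, \tH)$ with $\tilde\lambda := \eps/32$: the two required conditions $\tbeta^- \geq (1 - \tilde\lambda)\tbeta$ (equivalent to $\tilde\lambda(\beta+2) \geq 3$) and $\tbeta \geq 8\tilde\lambda^{-2}\log(1/\tilde\lambda)$ are both satisfied by our choice of $\beta$. The lemma yields $\MM{\tG} \leq (3/2 + O(\eps)) \cdot \MM{\tH}$, and since $\tH \subseteq H \cup G_B$, Bob's output matching has size at least $\MM{\tH}$, giving the desired $(3/2 + \eps)$-approximation after absorbing constants. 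The main conceptual step---and essentially the only nontrivial point---is the choice of the modification $\tG$: rather than trying to certify \emph{all} of Bob's edges inside the EDCS framework, we keep only the edges of a single well-chosen matching, absorbing the worst-case interference into a controlled $+2$ slack on the EDCS parameter.
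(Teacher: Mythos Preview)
There is a genuine gap in your verification of \propone for $\tH$. You write that for every $(u,v)\in\tH$ one has $\deg{\tH}{u}+\deg{\tH}{v}\le(\deg{H}{u}+\deg{H}{v})+2\le\beta+2$, but the second inequality is only justified when $(u,v)\in H$. For an edge $(u,v)\in M\setminus H$---that is, a Bob edge from the optimum matching---nothing bounds $\deg{H}{u}+\deg{H}{v}$ from above: the edge need not lie in $G_A$ at all, so neither EDCS property of $H$ applies, and in general both endpoints can have $H$-degree close to $\beta$. Concretely, take $u,v$ with disjoint neighborhoods in $H$, each of degree $\beta-1$ (which is consistent with \propone of $H$), with $(u,v)\notin G_A$ but $(u,v)\in\Mstar\cap E_B$; then in your $\tH$ the edge $(u,v)$ has degree sum $2\beta$, violating \propone with parameter $\beta+2$.

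The paper's proof confronts exactly this issue: rather than setting $\tH=H\cup M$, it adds to $H$ only those edges $(u,v)\in M$ with $\deg{H}{u}+\deg{H}{v}\le\beta$. This restores \propone by construction, and the edges of $M$ that are \emph{not} added now lie in $\tG\setminus\tH$ with degree sum exceeding $\beta$, so they satisfy \proptwo automatically. Your \proptwo argument (``any edge in $\tG\setminus\tH$ lies in $G_A\setminus H$'') is clean precisely because you included all of $M$ in $\tH$, but that is what breaks \propone; the selective inclusion is the missing idea.
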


Theorem~\ref{thm:communication} is based on the following protocol: 

\begin{tbox}

\underline{A one-way communication protocol for maximum matching.} 

\begin{enumerate}
\item Alice computes $H:=\EDCS{G_A,\beta,\beta-1}$ for $\beta := 32 \cdot \eps^{-2} \cdot \log{(1/\eps)}$ and sends it to Bob. 
\item Bob computes a maximum matching in $H \cup G_B$ and outputs it as the solution.
\end{enumerate}
\end{tbox}

By Proposition~\ref{prop:edcs-exists}, the EDCS $H$ computed by Alice always exists and can be found in polynomial time.
Moreover, by $\propone$ of EDCS $H$, the total number of edges (and hence the message size) sent by Alice is $O(n\beta)$. 
We now prove the correctness of the protocol which concludes the proof of Theorem~\ref{thm:communication}. 

\begin{lemma}\label{lem:communication}
	$\MM{G_A \cup G_B} \leq (3/2+\eps) \cdot \mu(H \cup G_B)$. 
\end{lemma}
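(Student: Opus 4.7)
The plan is to follow the overall strategy from Section~\ref{sec:results}: construct $\tH \subseteq H \cup G_B$ and $\tG \subseteq G_A \cup G_B$ such that (i) $\tH$ is an EDCS of $\tG$ with slightly relaxed parameters $(\tbeta,\tbeta^-) = (\beta+2,\beta-1)$ and (ii) $\MM{\tG} = \MM{G_A \cup G_B}$. Given these, Lemma~\ref{lem:general-edcs-matching} applied to the pair $(\tG,\tH)$ immediately yields
$\MM{G_A \cup G_B} = \MM{\tG} \leq (3/2+\eps)\,\MM{\tH} \leq (3/2+\eps)\,\MM{H \cup G_B}$, which is exactly Lemma~\ref{lem:communication}.

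To define the pair, let $\Mstar$ be a maximum matching of $G_A \cup G_B$ and write $\Mstar = \Mstar_A \cup \Mstar_B$ with $\Mstar_A \subseteq E_A$ and $\Mstar_B \subseteq E_B$ (edges in $E_A \cap E_B$ assigned arbitrarily). I then split $\Mstar_B$ according to the $H$-degree of endpoints by setting $\FL := \{(u,v) \in \Mstar_B : \deg{H}{u} + \deg{H}{v} \leq \beta - 1\}$ and $\FS := \Mstar_B \setminus \FL$, and take $\tH := H \cup \FL$ and $\tG := G_A \cup \Mstar_B$. Since $\FL \subseteq E_B$, the inclusion $\tH \subseteq H \cup G_B$ is immediate; and since $\Mstar \subseteq \tG \subseteq G_A \cup G_B$, we get $\MM{\tG} = |\Mstar|$, so property (ii) is automatic.

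The heart of the proof is property (i): verifying that $\tH$ is an $\EDCS{\tG,\beta+2,\beta-1}$. Both EDCS properties should follow from $H$ being an EDCS of $G_A$ together with the fact that $\FL$ is a (sub-)matching, so each endpoint picks up at most one extra degree. For (P1), take any edge $(u,v) \in \tH$: if $(u,v) \in H$, then $\deg{H}{u}+\deg{H}{v} \leq \beta$ by (P1) of $H$; if $(u,v) \in \FL$, then $\deg{H}{u}+\deg{H}{v} \leq \beta-1$ by the definition of $\FL$. In both cases $\deg{\tH}{u}+\deg{\tH}{v} \leq \beta+2$. For (P2), any edge $(u,v) \in \tG \setminus \tH$ is either in $G_A \setminus H$ (so $\deg{H}{u}+\deg{H}{v} \geq \beta-1$ by (P2) of $H$) or in $\FS \setminus H$ (so $\deg{H}{u}+\deg{H}{v} \geq \beta$ by the definition of $\FS$); since $H \subseteq \tH$, in either case $\deg{\tH}{u}+\deg{\tH}{v} \geq \beta-1$.

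The delicate point — and the reason for the $\FL/\FS$ split of $\Mstar_B$ — is that there is no uniform way to place every edge of $\Mstar_B$ into $(\tG,\tH)$: adding a high-$H$-degree $\Mstar_B$-edge to $\tH$ would break (P1), while excluding such an edge from $\tG$ altogether would drop $\MM{\tG}$ below $|\Mstar|$. Placing it in $\tG \setminus \tH$ is exactly the right compromise, because the large endpoint degrees make (P2) hold for free. What remains is only a routine parameter check: for $\eps < 1/2$ and $\beta$ a sufficiently large multiple of $\eps^{-2}\log(1/\eps)$ (absorbing the $\pm 2$ slack), the parameters $(\tbeta,\tbeta^-)=(\beta+2,\beta-1)$ together with $\lambda = \Theta(\eps)$ satisfy the hypotheses of Lemma~\ref{lem:general-edcs-matching}, completing the proof.
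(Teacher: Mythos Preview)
Your proof is correct and essentially identical to the paper's: both define $\tG := G_A \cup \Mstar_B$, add to $H$ exactly those edges of $\Mstar_B$ whose $H$-endpoint-degree sum is small, and verify the result is an $\EDCS{\tG,\beta+2,\beta-1}$ before invoking Lemma~\ref{lem:general-edcs-matching}. The only cosmetic difference is your threshold for inclusion in $\FL$ is $\deg{H}{u}+\deg{H}{v} \leq \beta-1$ whereas the paper uses $\leq \beta$; either choice works for the target parameters $(\beta+2,\beta-1)$.
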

\begin{proof}
	Let $\Mstar$ be a maximum matching in $G_A \cup G_B$ and $\Mstar_A$ and $\Mstar_B$ be its edges in $G_A$ and $G_B$, respectively. 
	Let $\tG := G_A \cup \Mstar_B$ and note that $\MM{\tG} = \MM{G}$ simply because $\Mstar$ belongs to $\tG$. Define the following subgraph $\tH \subseteq H \cup \Mstar_B$ (and hence $\subseteq H \cup G_B$): 
	$\tH$ contains all edges in $H$ and any edge $(u,v) \in \Mstar_B$ such that $\deg{H}{u} + \deg{H}{v} \leq \beta$. In the following, we prove that $(\MM{G}=)\MM{\tG} \leq (3/2+\eps) \cdot \MM{\tH}$, which finalizes the proof
	as $\MM{\tH} \leq \MM{H \cup G_B}$.
	
	We show that $\tH$ is an $\EDCS{\tG,\beta+2,\beta-1}$ and apply Lemma~\ref{lem:general-edcs-matching} to argue that $\tH$ contains a $(3/2)$-approximate matching of $\tG$. 
	We prove the EDCS properties of $\tH$ using the fact that for $v \in V$, $\deg{\tH}{v} \in \set{\deg{H}{v},\deg{H}{v}+1}$ as $\tH$ is obtained by adding a matching ($\subseteq \Mstar_B$) to $H$.
	\begin{itemize}
	\vspace{-0.2cm}
	\item $\propone$ of EDCS $\tH$: For an edge $(u,v) \in \tH$,
	\begin{align*}
		&\textnormal{if $(u,v) \in H$ then:} \quad &&\deg{\tH}{u} + \deg{\tH}{v} \leq \deg{H}{u} + \deg{H}{v} + 2 \leq \beta + 2, \tag{by $\propone$ of EDCS $H$ of $G_A$} \\
		&\textnormal{if $(u,v) \in \Mstar_B$ then:} \quad &&\deg{\tH}{u} + \deg{\tH}{v} \leq \deg{H}{u} + \deg{H}{v} + 2 \leq \beta + 2. \tag{as $(u,v) \in \Mstar_B$ is inserted to $\tH$ iff $\deg{H}{u} + \deg{H}{v} \leq \beta$}
	\end{align*}
		\vspace{-0.25cm}
	\item $\proptwo$ of EDCS $\tH$: For an edge $(u,v) \in \tG \setminus \tH$, 
	\begin{align*}
		&\textnormal{if $(u,v) \in G_A \setminus H$ then:} \qquad &&\deg{\tH}{u} + \deg{\tH}{v} \geq \deg{H}{u} + \deg{H}{v}  \geq \beta-1, \tag{by $\proptwo$ of EDCS $H$ of $G_A$} \\
		&\textnormal{if $(u,v) \in \Mstar_B \setminus \tH$ then:} \qquad &&\deg{\tH}{u} + \deg{\tH}{v} \geq \deg{H}{u} + \deg{H}{v} > \beta. \tag{as $(u,v) \in \Mstar_B$ is not inserted to $\tH$ iff $\deg{H}{u} + \deg{H}{v} > \beta$}
	\end{align*}
	\end{itemize}
	
	As such, $\tH$ is an $\EDCS{\tG,\beta+2,\beta-1}$. By Lemma~\ref{lem:general-edcs-matching} and the choice of parameter $\beta$, we obtain that $\MM{\tG} \leq (3/2+\eps)\cdot\MM{\tH}$, finalizing the proof. 
\end{proof}

\section{The Stochastic Matching Problem}\label{sec:stochastic}

\newcommand{\hstarp}{\tH_p}
\newcommand{\gstarp}{\tG_p}

Recall that in the stochastic matching problem, the goal is to compute a bounded-degree subgraph $H$ of a given graph $G$, such 
that $\Ex\bracket{\MM{H_p}}$ is a good approximation of $\Ex\bracket{\MM{G_p}}$, where $G_p$ is a realization of $G$ (i.e a subgraph where every edge is sampled with probability $p$), and $H_p = H \cap G_p$. 
In this section, we formalize Result~\ref{res:stochastic} by proving the following theorem. 

\begin{theorem}[Formalizing Result~\ref{res:stochastic}]\label{thm:stochastic}
	There exists a deterministic poly-time algorithm that given a graph $G(V,E)$ and parameters $\eps,p > 0$ with $\eps < 1/4$, computes a subgraph $H(V,E_H)$ of $G$ with maximum degree $O(\frac{\log{(1/\eps p)}}{\eps^2 \cdot p})$ such that the ratio of the expected size of a maximum matching in 
	realizations of $G$ to realizations of $H$ is at most $(3/2+\eps)$, i.e., $\Ex\bracket{\MM{G_p}} \leq (3/2+\eps) \cdot \Ex\bracket{\MM{H_p}}$.   
\end{theorem}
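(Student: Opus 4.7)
The plan is to compute $H := \EDCS{G, \beta, \beta-1}$ for $\beta := \Theta(\log(1/\eps p)/(\eps^2 p))$; by $\propone$ of $H$, every vertex has $H$-degree at most $\beta$, which yields the required maximum-degree bound. Following the paper's unified strategy, for each realization $G_p$ and $H_p := H \cap G_p$, I will exhibit subgraphs $\tH \subseteq H_p$ and $\tG \subseteq G_p$ such that (a) $\tH$ is an $\EDCS{\tG, \tilde\beta, \tilde\beta^-}$ with $\tilde\beta = (1+O(\lambda)) p\beta$ and $\tilde\beta^- = (1-O(\lambda)) p\beta$ for $\lambda = O(\eps)$, and (b) $\Ex[\MM{\tG}] \geq (1-O(\eps))\cdot\Ex[\MM{G_p}]$. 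Since $p\beta = \Theta(\log(1/\eps)/\eps^2)$ satisfies the hypothesis of Lemma~\ref{lem:general-edcs-matching}, applying that lemma to $\tH$ gives $\MM{\tG} \leq (3/2+\eps) \MM{\tH} \leq (3/2+\eps)\MM{H_p}$; taking expectations and combining with (b) yields the theorem after rescaling $\eps$.

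The construction is driven by two concentration events. A vertex $v$ with $\deg{H}{v} \geq \beta/2$ is \emph{bad} if $\card{\deg{H_p}{v} - p\deg{H}{v}} > \lambda p \beta$; by multiplicative Chernoff (Proposition~\ref{prop:chernoff}), this happens with probability at most $\eps p$ for our choice of $\beta$. An edge $(u,v) \in G \setminus H$ is \emph{bad} if $\deg{H_p}{u} + \deg{H_p}{v} < (1-\lambda) p\beta$; by $\proptwo$ of $H$, $\deg{H}{u}+\deg{H}{v} \geq \beta-1$, and multiplicative Chernoff applied to the degree sum gives per-edge bad probability at most $\eps p$. Let $V_B$ be the bad vertices and $E_B$ the bad edges; set $\tH := H_p$ with all $V_B$-incident edges removed, and $\tG := G_p$ with $V_B$-incident edges and $E_B$ removed. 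Then $\propone$ of $\tH$ follows from concentration at the two non-bad endpoints of any $\tH$-edge combined with $\propone$ of $H$; and $\proptwo$ of $\tH$ is ensured by construction since any surviving $(u,v) \in \tG \setminus \tH$ lies in $G \setminus H$, has both endpoints non-bad, and is not in $E_B$, so $\deg{H_p}{u}+\deg{H_p}{v} \geq (1-\lambda) p\beta$, up to a slack we absorb into $O(\lambda)$ for $V_B$-incident edges of $u$ and $v$.

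For claim (b), I bound $\MM{G_p} - \MM{\tG}$ by a sum of the matching edges lost to removing $V_B$-incident edges and bad $E_B$-edges. Using the EDCS-specific structural bound $\card{E_H} \leq 2\beta\MM{H}$ (a consequence of $\propone$ and the fact that every $H$-edge has an endpoint in any maximum matching of $H$), only $8\MM{H}$ vertices have $\deg{H}{v} \geq \beta/2$, so $\Ex[\card{V_B}] \leq 8\eps p\cdot\MM{H} \leq 8\eps\cdot\Ex[\MM{H_p}]$ via the trivial lower bound $\Ex[\MM{H_p}] \geq p\MM{H}$ (by sampling a maximum matching of $H$). For the $E_B$ contribution, a linearity-of-expectation argument applied to the edges of a fixed maximum matching of $G$ (which, when sampled, lower-bounds $\Ex[\MM{G_p}]$ up to augmenting-path contributions) gives an additional loss of $O(\eps) \cdot \Ex[\MM{G_p}]$ that can be rearranged into the main bound.

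The main obstacle is that $\beta$ has no $n$-dependence, so a naive union bound over the $n$ vertices or $O(n^2)$ potential edges is unavailable. The argument must amortize per-vertex and per-edge failure probabilities against EDCS-specific structural bounds ($\card{E_H} = O(\beta\MM{H})$ and $\Ex[\MM{H_p}] = \Omega(p\MM{H})$) rather than against $n$. A secondary obstacle is the $\proptwo$ slack in $\tH$ caused by removing $V_B$-incident edges from $H_p$: bounding this pointwise per edge requires either a careful second-moment or LLL-style argument showing that almost no non-bad vertex has many $V_B$-neighbors. A third subtlety is fully relating $\Ex[\MM{\tG}]$ to $\Ex[\MM{G_p}]$ (rather than merely to $p\MM{G}$), since $\Ex[\MM{G_p}]$ may exceed $p\MM{G}$ due to augmenting-path effects in random subgraphs; this requires a more careful matching-by-matching comparison or a fractional relaxation argument.
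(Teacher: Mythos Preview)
Your high-level plan matches the paper's, but the execution has a genuine gap. Restricting $V_B$ to vertices with $\deg{H}{v} \geq \beta/2$ breaks \propone of $\tH$: take $(u,v)\in\tH$ with $\deg{H}{u}=\beta/4$. Then $u\notin V_B$ by fiat, yet nothing prevents $\deg{H_p}{u}$ from being as large as $\beta/4$, which for small $p$ vastly exceeds the target $(1+O(\lambda))p\beta$. Your sentence ``\propone of $\tH$ follows from concentration at the two non-bad endpoints'' is therefore false, because for low-degree non-bad endpoints you have established no concentration at all. The paper's fix is not to restrict, but to push the per-vertex failure probability all the way down to $(\eps p)^{\Theta(C)}$ (the constant $C$ in $\beta$ sits in the exponent of the Chernoff bound), and then to count against the $O(\beta\,\mu(G))$ vertices of nonzero $H$-degree; the extra $\beta$ factor is swallowed by the polynomial slack, giving $\Ex[|V^+|]\le (\eps p)^{\Theta(1)}\mu(G)$. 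Your aim of a bound in terms of $\mu(H)$ with only an $\eps p$ per-vertex probability cannot close this gap.

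Your ``secondary obstacle'' and the bad-edge set $E_B$ are detours the paper avoids with one line: enlarge the low-side set $V^-$ to include every $H$-neighbor of $V^+$, and delete from $\tG_p$ all non-$H$ edges incident to $V^-$ (but keep $H$-edges there). Then any $(u,v)\in\tG_p\setminus\tH_p$ has both endpoints outside $V^-$, hence with no $H$-neighbor in $V^+$, so $\deg{\tH_p}{u}=\deg{H_p}{u}$ exactly and \proptwo follows directly from Chernoff plus \proptwo of $H$; no per-edge event, no second-moment or LLL argument is needed. Since $|V^-|\le |V^+|\cdot\beta$ plus the direct low-deviation term, $\Ex[|V^-|]$ is still polynomially small in $\eps p$ times $\mu(G)$, and the matching loss is bounded by $\Ex[|V^+|+|V^-|]$ as a \emph{vertex} count. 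Finally, your ``third subtlety'' is a non-issue: only the trivial inequality $\Ex[\mu(G_p)]\ge p\,\mu(G)$ is ever used, to convert the additive loss $\Ex[|V^+|+|V^-|]\le(\eps p)^{\Theta(1)}\mu(G)$ into a multiplicative $(1-\eps)$ factor on $\Ex[\mu(G_p)]$.
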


We note that while in Theorem~\ref{thm:stochastic}, we state the bound in expectation, the same result also holds with high probability as long as $\mu(G) = \omega(1/p)$ (i.e., just barely more than a constant), by
concentration of maximum matching size in edge-sampled subgraphs (see, e.g.~\cite{AssadiBBMS17}, Lemma 3.1). 
The algorithm in Theorem~\ref{thm:stochastic} simply computes an EDCS of the input graph as follows: 

\begin{tbox}

\underline{An algorithm for the stochastic matching problem.}

\medskip

Output the subgraph $H:=\EDCS{G,\beta,\beta-1}$ for $\beta := \frac{C \log{(1/\eps p)}}{\eps^2 p}$, for large enough constant $C$.
\end{tbox}

By Proposition~\ref{prop:edcs-exists}, the EDCS $H$ in the above algorithm always exists and can be found in polynomial time.
Moreover, by $\propone$ of EDCS $H$, the total number of edges in this subgraph is $O(n\beta)$. We now prove the bound on the approximation ratio which concludes the proof of Theorem~\ref{thm:stochastic} (by re-parametrizing 
$\eps$ to be a constant factor smaller). 

\begin{lemma}\label{lem:stochastic}
	Let $H_p := H \cap G_p$ denote a realization of $H$; then $\Ex\bracket{\MM{G_p}} \leq (3/2+O(\eps)) \cdot \Ex\bracket{\MM{H_p}}$ where the randomness is taken over the realization $G_p$ of $G$. 
\end{lemma}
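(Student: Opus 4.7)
The plan is to follow the overall strategy from Section~\ref{sec:results} and mirror the template of Lemma~\ref{lem:communication}: given a realization $G_p$, fix a maximum matching $\Mstar_p$ of $G_p$ and define $\tG_p := H_p \cup \Mstar_p$ and $\tH_p := H_p$, so that $\tH_p \subseteq H_p$ and $\tG_p \subseteq G_p$. Since $\Mstar_p \subseteq \tG_p$, we have $\MM{\tG_p} = \MM{G_p}$. The goal is then to show that, with high probability over the random realization, $\tH_p$ is an $\EDCS{\tG_p,\tbeta,\tbeta^-}$ for $\tbeta = (1+O(\eps))\,p\beta$ and $\tbeta^- = (1-O(\eps))\,p\beta$; Lemma~\ref{lem:general-edcs-matching} would then yield $\MM{G_p} = \MM{\tG_p} \leq (3/2+O(\eps))\,\MM{\tH_p} \leq (3/2+O(\eps))\,\MM{H_p}$ pointwise on this event.

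The heart of the argument is a degree-concentration claim for $H_p$. For each $v \in V$, $\deg{H_p}{v}$ is a sum of $\deg{H}{v} \leq \beta$ independent Bernoulli$(p)$ indicators, so Chernoff (Proposition~\ref{prop:chernoff}) gives $\Pr\bigl[\,|\deg{H_p}{v} - p\deg{H}{v}| > \eps p \beta\,\bigr] \leq 2(\eps p)^{\Omega(C)}$ for our choice $\beta = C \log(1/\eps p)/(\eps^2 p)$ with $C$ a sufficiently large constant. Each such ``bad'' event depends only on edges of $H$ incident to $v$, and is therefore mutually independent of all vertex-events outside the two-hop $H$-neighborhood of $v$, i.e., of all but at most $\beta^2$ other events. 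Applying the Lovasz Local Lemma (Proposition~\ref{prop:lll}) in the same spirit as Claim~\ref{clm:tH-is-EDCS} then shows that simultaneous concentration at \emph{every} vertex holds with high probability.

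Conditional on this good event, both EDCS properties of $\tH_p$ in $\tG_p$ follow directly. For (P1), any $(u,v) \in \tH_p = H_p$ also lies in $H$, so (P1) of $H$ gives $\deg{H}{u}+\deg{H}{v} \leq \beta$, and concentration yields $\deg{\tH_p}{u} + \deg{\tH_p}{v} \leq p\beta + 2\eps p\beta = \tbeta$. For (P2), any edge in $\tG_p \setminus \tH_p = \Mstar_p \setminus H_p$ must lie in $G \setminus H$ (otherwise it would lie in $H \cap G_p = H_p \subseteq \tH_p$), so (P2) of $H$ gives $\deg{H}{u}+\deg{H}{v} \geq \beta-1$, and concentration yields $\deg{\tH_p}{u} + \deg{\tH_p}{v} \geq p(\beta-1) - 2\eps p\beta \geq \tbeta^-$. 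The resulting parameters $\tbeta,\tbeta^-$ satisfy the hypothesis of Lemma~\ref{lem:general-edcs-matching} with $\lambda = O(\eps)$ provided $C$ is large enough that $p\beta \geq 8\lambda^{-2}\log(1/\lambda)$, delivering the desired pointwise bound.

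The main obstacle will be converting this pointwise statement into the expectation bound in the lemma, since on the (rare) concentration-failure event $\MM{G_p}$ is only bounded by $n$. To handle this, I would invoke the crude lower bound $\Ex\bracket{\MM{H_p}} \geq p\cdot\MM{H} = \Omega(p\cdot\MM{G})$ (obtained by fixing a maximum matching of $H$ and noting that its edges survive independent sampling with probability $p$, combined with Lemma~\ref{lem:general-edcs-matching} applied to $H$ itself), and then pick the constant $C$ large enough that the failure probability renders the additive error $n\cdot\Pr[\text{failure}]$ negligible compared to $\eps\cdot\Ex\bracket{\MM{H_p}}$, absorbing it into the $O(\eps)$ slack in the final bound.
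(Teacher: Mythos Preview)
Your proposal has a genuine gap in the concentration step. The Lov\'asz Local Lemma (Proposition~\ref{prop:lll}) only guarantees that the good event---simultaneous concentration of $\deg{H_p}{v}$ at every vertex---holds with \emph{strictly positive} probability, not with high probability. In Claim~\ref{clm:tH-is-EDCS} this sufficed because the randomness there was an auxiliary bipartition used solely to prove existence; here the randomness is the realization $G_p$ itself and the lemma asks for a bound in expectation, so ``positive probability'' buys you nothing. Replacing LLL by a union bound does not rescue the argument either: the per-vertex failure probability is $(\eps p)^{\Theta(C)}$, which depends only on $\eps$ and $p$, while the number of relevant vertices is $\Theta(\beta\,\mu(G))$ (or up to $n$). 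Since $\beta$---and hence $C$---must be an absolute constant independent of $n$ (this is the whole point of the stochastic matching problem), you cannot drive the global failure probability below, say, $\eps p/\mu(G)$. Your final paragraph therefore also fails: the error term $\mu(G)\cdot\Pr[\text{failure}]$ (let alone $n\cdot\Pr[\text{failure}]$) cannot be made $\le \eps\cdot\Ex[\MM{H_p}] = \Theta(\eps p\,\mu(G))$ by any choice of the constant $C$.

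The paper's proof sidesteps this obstacle by \emph{not} demanding that every vertex concentrate. It defines the set $V^+\cup V^-$ of ``bad'' vertices whose $H_p$-degree deviates (or whose $H$-neighbor's degree deviates upward), and bounds the \emph{expected size} of this set by $(\eps p)^{3}\mu(G)$---the key trick being that at most $2\beta\,\mu(G)$ vertices have nonzero degree in $H$, via a vertex-cover argument. It then deletes from $G_p$ all edges touching $V^+$ and all non-$H$ edges touching $V^-$ to form $\tG_p$, and deletes from $H_p$ all edges touching $V^+$ to form $\tH_p$. By construction $\tH_p$ is an EDCS of $\tG_p$ with the desired parameters \emph{deterministically} for every realization, while the deletions cost at most $\Ex[|V^+|+|V^-|]\ll \eps\,\Ex[\mu(G_p)]$ in expected matching size. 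This ``excise the few bad vertices and compare expectations'' maneuver is precisely the idea your proposal is missing; your (P1)/(P2) verification is correct once restricted to the good vertices, but the global concentration you invoke simply need not hold.
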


Suppose first that $H_p$ were an EDCS of $G_p$; we would be immediately done in this case  as we could have applied Lemma~\ref{lem:general-edcs-matching} directly and prove Lemma~\ref{lem:stochastic}. Unfortunately, however, this 
might not be the case. Instead, we exhibit subgraphs $\hstarp \subseteq H_p$ and $\gstarp \subseteq G_p$ with the following properties:
\begin{enumerate}
	\item\label{prop:tGp} $\expect{\mu(G_p)} \leq (1+\eps)\expect{\mu(\gstarp)}$, where the expectation is taken over the realization $G_p$ of $G$.
	\item\label{prop:tHp} $\hstarp$ is an $\EDCS{G,(1+\eps)p \cdot \beta, (1-2\eps)p \cdot \beta}$ for $\gstarp$.
\end{enumerate}
\noindent
Showing these properties concludes the proof of Lemma~\ref{lem:stochastic}, as for the EDCS in item~(\ref{prop:tHp}) above, we have
$\frac{(1+\eps)p \cdot \beta}{(1-2\eps) \cdot p\beta} = 1+O(\eps)$, so by Lemma \ref{lem:general-edcs-matching}
we get that $\mu(\gstarp) \leq (3/2 + O(\eps)) \cdot \mu(\hstarp)$. Combining this with item~(\ref{prop:tGp}) then concludes  
$\expect{\mu(G_p)} \leq (1+\eps) \cdot (3/2+\eps)\expect{\mu(H_p)}$. 

It now remains to exhibit $\hstarp$ and $\gstarp$ that satisfy the main properties stated above. Note that for any vertex $v \in V$, we have 
$\expect{\deg{H_p}{v}} = p \cdot {\deg{H}{v}}$ by definition of a realization $G_p$ (and hence $H_p$).
We now want to separate out vertices that deviate significantly from this expectation.

\begin{definition} Let $V^+ \subseteq V$ contain all vertices $v$ for which
$\deg{H_p}{v} >  p \cdot \deg{H}{v} + \eps p \beta / 2$. 
Similarly, let $V^-$ contain all vertices $v$ such that
$\deg{H_p}{v} <  p \cdot \deg{H}{v} -  \eps p \beta / 2$
OR there exists an edge $(v,w) \in H$ such that $w \in V^+$, i.e., if $v$ is neighbor to $V^+$.
\end{definition}

\begin{claim}
	\label{claim:stochastic}
	 $\expect{|V^+|} \leq \eps^{7} p^{7}\mu(G)$ and $\expect{|V^-|} \leq \eps^{4}p^{4}\mu(G)$, where the expectation is over the realization $G_p$ of $G$. As we a result we also have $\expect{|V^+| + |V^-|} \leq \eps^{3}p^{3}\mu(G)$.
\end{claim}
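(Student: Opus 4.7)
I will bound $\expect{|V^+|}$ and $\expect{|V^-|}$ separately, each by combining a per-vertex Chernoff tail bound with a counting argument that restricts attention to the few vertices for which the failure event is even possible.

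First, a necessary condition for $v \in V^+$ is $\deg{H}{v} > \eps p\beta/2$: since $\deg{H_p}{v} \le \deg{H}{v}$, the defining inequality $\deg{H_p}{v} > p\deg{H}{v} + \eps p\beta/2$ gives $(1-p)\deg{H}{v} > \eps p\beta/2$. To count such ``candidate'' vertices, I use the structural fact $|E_H| \le 2\beta\mu(G)$: taking any maximum matching $M$ of $H$, the set $V(M)$ is a vertex cover of $H$ by maximality, $|V(M)| = 2\mu(H) \le 2\mu(G)$, and property (P1) implies $\deg{H}{v} \le \beta$ for every vertex with at least one edge, so $|E_H| \le 2\mu(H)\beta$. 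Summing, $\sum_v \deg{H}{v} = 2|E_H| \le 4\beta\mu(G)$, and the number of candidates is at most $8\mu(G)/(\eps p)$. For each candidate, $\deg{H_p}{v}$ is a sum of $\deg{H}{v} \le \beta$ independent Bernoulli$(p)$ variables with mean $p\deg{H}{v}$, and the multiplicative Chernoff bound (Proposition~\ref{prop:chernoff}; the $\delta \ge 1$ regime gives an even stronger estimate if it applies) yields
\[
\Pr[v \in V^+] \;\le\; 2\exp\!\left(-\frac{\eps^2 p\beta^2}{12\deg{H}{v}}\right) \;\le\; 2\exp\!\left(-\frac{\eps^2 p\beta}{12}\right) \;\le\; 2(\eps p)^{C/12},
\]
using $\deg{H}{v} \le \beta$ and $\beta \ge C\log(1/\eps p)/(\eps^2 p)$. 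For $C$ sufficiently large, this gives $\expect{|V^+|} \;\le\; \frac{8\mu(G)}{\eps p}\cdot 2(\eps p)^{C/12} \;\le\; \eps^7 p^7 \mu(G)$.

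For $V^-$ I decompose it as $V^- = V^-_1 \cup V^-_2$, where $V^-_1 = \{v : \deg{H_p}{v} < p\deg{H}{v} - \eps p\beta/2\}$ and $V^-_2 = \{v : \exists\,w \in V^+,\ (v,w) \in E_H\}$. The bound on $\expect{|V^-_1|}$ is symmetric to that for $V^+$ (here the necessary candidate condition is $\deg{H}{v} > \eps\beta/2$, which in fact yields even fewer candidates). For $V^-_2$ I note that $|V^-_2| \le \beta\cdot|V^+|$, since $\deg{H}{w} \le \beta$ for every $w \in V^+$; hence
\[
\expect{|V^-_2|} \;\le\; \beta\cdot\expect{|V^+|} \;=\; O\!\left(\tfrac{\log(1/\eps p)}{\eps^2 p}\right)\cdot(\eps p)^7\mu(G) \;=\; O\!\left(\eps^5 p^6 \log(1/\eps p)\right)\mu(G) \;\le\; \eps^4 p^4 \mu(G),
\]
using that $\eps p^2 \log(1/\eps p)$ is bounded on $\eps < 1/4,\ p \in (0,1]$. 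The combined bound $\expect{|V^+|+|V^-|} \le \eps^3 p^3 \mu(G)$ is then immediate since $\eps p \le 1$.

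The main obstacle will be juggling the two regimes of the Chernoff bound (whether the target deviation $\eps p\beta/2$ exceeds the per-vertex mean $p\deg{H}{v}$) and absorbing the $\beta$ factor that appears in the $V^-_2$ bound. Both are addressed by taking the constant $C$ in $\beta$ sufficiently large, so that the per-vertex failure probability is a sufficiently high power of $\eps p$.
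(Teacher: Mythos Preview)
Your approach is essentially the same as the paper's: a per-vertex Chernoff tail bound, combined with the edge-count estimate $|E_H|=O(\beta\,\mu(G))$ to restrict to a small set of candidate vertices, and then $|V^-_2|\le \beta\,|V^+|$ for the neighbor part. The only cosmetic difference is your candidate set (vertices with $\deg{H}{v}>\eps p\beta/2$, giving $O(\mu(G)/(\eps p))$ candidates) versus the paper's (all non-isolated vertices of $H$, giving $O(\beta\,\mu(G))$ candidates); either works once $C$ is large.

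One small bookkeeping point: in the $V^-_2$ step you multiply $\beta$ by the already-rounded bound $\eps^7 p^7\mu(G)$ and then invoke ``$\eps p^2\log(1/\eps p)$ is bounded,'' but the hidden constant in your $O(\cdot)$ is $C$, which is large, so this does not quite close. The paper handles this by retaining an explicit $K^{-1}$ slack (with $K>C$) in the $V^+$ bound before multiplying by $\beta$; this is precisely the fix your final paragraph anticipates, so just carry that extra power of $\eps p$ through rather than truncating at $\eps^7 p^7$ first.
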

\noindent
Before proving this claim, let us consider why it completes the proof of Lemma~\ref{lem:stochastic}.

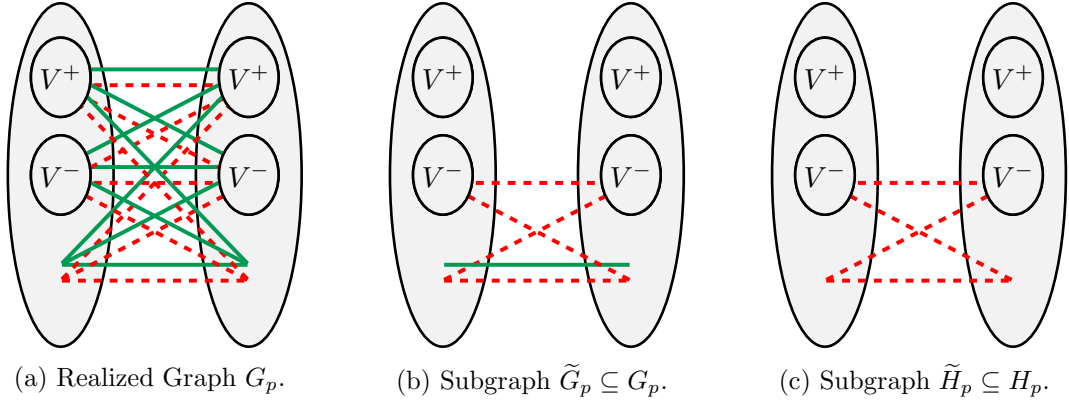
\begin{figure}[t]
    \centering
    \subcaptionbox{Realized Graph $G_p$.}[0.3\textwidth]{
    \begin{tikzpicture}[ auto ,node distance =1cm and 2cm , on grid , semithick , state/.style ={ circle ,top color =white , bottom color = white , draw, black , text=black}, every node/.style={inner sep=0,outer sep=0}]

\node[ellipse, black, draw, fill=black!5, minimum height=130pt, minimum width = 40pt, line width=1pt]  (L) {};
\node[ellipse, black, draw, fill=black!5, minimum height=30pt, minimum width = 20pt, line width=1pt]  (Lp) [above=1.3cm of L]{$V^+$};
\node[ellipse, black, draw, fill=black!5, minimum height=30pt, minimum width = 20pt, line width=1pt]  (Lm) [below=0cm of L]{$V^-$};
\node  (Lx) [below=1.3cm of L]{};

\node[ellipse, black, draw, fill=black!5, minimum height=130pt, minimum width = 40pt, line width=1pt]   (R) [right = 2.5cm of L]{};
\node[ellipse, black, draw, fill=black!5, minimum height=30pt, minimum width = 20pt, line width=1pt]  (Rp) [above=1.3cm of R]{$V^+$};
\node[ellipse, black, draw, fill=black!5, minimum height=30pt, minimum width = 20pt, line width=1pt]  (Rm) [below=0cm of R]{$V^-$};
\node  (Rx) [below=1.3cm of R]{};

\node (lpt)[above right=3pt and 0pt of Lp]{};
\node (lpb)[below right=3pt and 0pt of Lp]{};

\node (rpt)[above left=3pt and 0pt of Rp]{};
\node (rpb)[below left=3pt and 0pt of Rp]{};

\node (lmt)[above right=3pt and 0pt of Lm]{};
\node (lmb)[below right=3pt and 0pt of Lm]{};

\node (rmt)[above left=3pt and 0pt of Rm]{};
\node (rmb)[below left=3pt and 0pt of Rm]{};

\node (lxt)[above right=3pt and 0pt of Lx]{};
\node (lxb)[below right=3pt and 0pt of Lx]{};

\node (rxt)[above left=3pt and 0pt of Rx]{};
\node (rxb)[below left=3pt and 0pt of Rx]{};

\draw[line width=1.5pt, ForestGreen] (lpt) to (rpt);
\draw[line width=1.5pt, red, dashed] (lpb) to (rpb);

\draw[line width=1.5pt, ForestGreen] (lmt) to (rmt);
\draw[line width=1.5pt, red, dashed] (lmb) to (rmb);

\draw[line width=1.5pt, ForestGreen] (lpt) to (rmt);
\draw[line width=1.5pt, red, dashed] (lpb) to (rmb);

\draw[line width=1.5pt, ForestGreen] (lmt) to (rpt);
\draw[line width=1.5pt, red, dashed] (lmb) to (rpb);

\draw[line width=1.5pt, ForestGreen] (lxt) to (rxt);
\draw[line width=1.5pt, red, dashed] (lxb) to (rxb);

\draw[line width=1.5pt, ForestGreen] (lxt) to (rpt);
\draw[line width=1.5pt, red, dashed] (lxb) to (rpb);

\draw[line width=1.5pt, ForestGreen] (lxt) to (rmt);
\draw[line width=1.5pt, red, dashed] (lxb) to (rmb);

\draw[line width=1.5pt, ForestGreen] (rxt) to (lpt);
\draw[line width=1.5pt, red, dashed] (rxb) to (lpb);

\draw[line width=1.5pt, ForestGreen] (rxt) to (lmt);
\draw[line width=1.5pt, red, dashed] (rxb) to (lmb);

\node[ellipse, black, draw, fill=black!5, minimum height=30pt, minimum width = 20pt, line width=1pt]  (Lp) [above=1.3cm of L]{$V^+$};
\node[ellipse, black, draw, fill=black!5, minimum height=30pt, minimum width = 20pt, line width=1pt]  (Lm) [below=0cm of L]{$V^-$};

\node[ellipse, black, draw, fill=black!5, minimum height=30pt, minimum width = 20pt, line width=1pt]  (Rp) [above=1.3cm of R]{$V^+$};
\node[ellipse, black, draw, fill=black!5, minimum height=30pt, minimum width = 20pt, line width=1pt]  (Rm) [below=0cm of R]{$V^-$};

\end{tikzpicture}
    
    }
    \subcaptionbox{Subgraph $\tG_p \subseteq G_p$.}[0.3\textwidth]{
     \begin{tikzpicture}[ auto ,node distance =1cm and 2cm , on grid , semithick , state/.style ={ circle ,top color =white , bottom color = white , draw, black , text=black}, every node/.style={inner sep=0,outer sep=0}]

\node[ellipse, black, draw, fill=black!5, minimum height=130pt, minimum width = 40pt, line width=1pt]  (L) {};
\node[ellipse, black, draw, fill=black!5, minimum height=30pt, minimum width = 20pt, line width=1pt]  (Lp) [above=1.3cm of L]{$V^+$};
\node[ellipse, black, draw, fill=black!5, minimum height=30pt, minimum width = 20pt, line width=1pt]  (Lm) [below=0cm of L]{$V^-$};
\node  (Lx) [below=1.3cm of L]{};

\node[ellipse, black, draw, fill=black!5, minimum height=130pt, minimum width = 40pt, line width=1pt]   (R) [right = 2.5cm of L]{};
\node[ellipse, black, draw, fill=black!5, minimum height=30pt, minimum width = 20pt, line width=1pt]  (Rp) [above=1.3cm of R]{$V^+$};
\node[ellipse, black, draw, fill=black!5, minimum height=30pt, minimum width = 20pt, line width=1pt]  (Rm) [below=0cm of R]{$V^-$};
\node  (Rx) [below=1.3cm of R]{};

\node (lpt)[above right=3pt and 0pt of Lp]{};
\node (lpb)[below right=3pt and 0pt of Lp]{};

\node (rpt)[above left=3pt and 0pt of Rp]{};
\node (rpb)[below left=3pt and 0pt of Rp]{};

\node (lmt)[above right=3pt and 0pt of Lm]{};
\node (lmb)[below right=3pt and 0pt of Lm]{};

\node (rmt)[above left=3pt and 0pt of Rm]{};
\node (rmb)[below left=3pt and 0pt of Rm]{};

\node (lxt)[above right=3pt and 0pt of Lx]{};
\node (lxb)[below right=3pt and 0pt of Lx]{};

\node (rxt)[above left=3pt and 0pt of Rx]{};
\node (rxb)[below left=3pt and 0pt of Rx]{};

\draw[line width=1.5pt, red, dashed] (lmb) to (rmb);

\draw[line width=1.5pt, red, dashed] (lxb) to (rxb);

\draw[line width=1.5pt, red, dashed] (lxb) to (rmb);

\draw[line width=1.5pt, ForestGreen] (lxt) to (rxt);

\draw[line width=1.5pt, red, dashed] (rxb) to (lmb);

\node[ellipse, black, draw, fill=black!5, minimum height=30pt, minimum width = 20pt, line width=1pt]  (Lp) [above=1.3cm of L]{$V^+$};
\node[ellipse, black, draw, fill=black!5, minimum height=30pt, minimum width = 20pt, line width=1pt]  (Lm) [below=0cm of L]{$V^-$};

\node[ellipse, black, draw, fill=black!5, minimum height=30pt, minimum width = 20pt, line width=1pt]  (Rp) [above=1.3cm of R]{$V^+$};
\node[ellipse, black, draw, fill=black!5, minimum height=30pt, minimum width = 20pt, line width=1pt]  (Rm) [below=0cm of R]{$V^-$};

\end{tikzpicture}
    }
    \subcaptionbox{Subgraph $\tH_p \subseteq H_p$.}[0.3\textwidth]{
     \begin{tikzpicture}[ auto ,node distance =1cm and 2cm , on grid , semithick , state/.style ={ circle ,top color =white , bottom color = white , draw, black , text=black}, every node/.style={inner sep=0,outer sep=0}]

\node[ellipse, black, draw, fill=black!5, minimum height=130pt, minimum width = 40pt, line width=1pt]  (L) {};
\node[ellipse, black, draw, fill=black!5, minimum height=30pt, minimum width = 20pt, line width=1pt]  (Lp) [above=1.3cm of L]{$V^+$};
\node[ellipse, black, draw, fill=black!5, minimum height=30pt, minimum width = 20pt, line width=1pt]  (Lm) [below=0cm of L]{$V^-$};
\node  (Lx) [below=1.3cm of L]{};

\node[ellipse, black, draw, fill=black!5, minimum height=130pt, minimum width = 40pt, line width=1pt]   (R) [right = 2.5cm of L]{};
\node[ellipse, black, draw, fill=black!5, minimum height=30pt, minimum width = 20pt, line width=1pt]  (Rp) [above=1.3cm of R]{$V^+$};
\node[ellipse, black, draw, fill=black!5, minimum height=30pt, minimum width = 20pt, line width=1pt]  (Rm) [below=0cm of R]{$V^-$};
\node  (Rx) [below=1.3cm of R]{};

\node (lpt)[above right=3pt and 0pt of Lp]{};
\node (lpb)[below right=3pt and 0pt of Lp]{};

\node (rpt)[above left=3pt and 0pt of Rp]{};
\node (rpb)[below left=3pt and 0pt of Rp]{};

\node (lmt)[above right=3pt and 0pt of Lm]{};
\node (lmb)[below right=3pt and 0pt of Lm]{};

\node (rmt)[above left=3pt and 0pt of Rm]{};
\node (rmb)[below left=3pt and 0pt of Rm]{};

\node (lxt)[above right=3pt and 0pt of Lx]{};
\node (lxb)[below right=3pt and 0pt of Lx]{};

\node (rxt)[above left=3pt and 0pt of Rx]{};
\node (rxb)[below left=3pt and 0pt of Rx]{};

\draw[line width=1.5pt, red, dashed] (lmb) to (rmb);

\draw[line width=1.5pt, red, dashed] (lxb) to (rxb);

\draw[line width=1.5pt, red, dashed] (lxb) to (rmb);

\draw[line width=1.5pt, red, dashed] (rxb) to (lmb);

\node[ellipse, black, draw, fill=black!5, minimum height=30pt, minimum width = 20pt, line width=1pt]  (Lp) [above=1.3cm of L]{$V^+$};
\node[ellipse, black, draw, fill=black!5, minimum height=30pt, minimum width = 20pt, line width=1pt]  (Lm) [below=0cm of L]{$V^-$};

\node[ellipse, black, draw, fill=black!5, minimum height=30pt, minimum width = 20pt, line width=1pt]  (Rp) [above=1.3cm of R]{$V^+$};
\node[ellipse, black, draw, fill=black!5, minimum height=30pt, minimum width = 20pt, line width=1pt]  (Rm) [below=0cm of R]{$V^-$};

\end{tikzpicture}
    }
    \caption{Illustration of the sets $V^+,V^-$ and the subgraphs $\tG_p$ and $\tH_p$ in the proof of Lemma~\ref{lem:stochastic} on a bipartite graph $G$. Here, (green) solid lines
    denote the edges of $G_p$ that appear in each subgraph and (red) dashed lines denote the edges of $H_p$. %The figures show which of these edges belong to each subgraph. 
    }
    \label{fig:stochastic-matching}
\end{figure}

\begin{proof}[Proof of Lemma \ref{lem:stochastic} (assuming Claim \ref{claim:stochastic})]
To prove Lemma \ref{lem:stochastic} it is enough to show the existence of subgraphs $\gstarp$ and $\hstarp$ that satisfy the properties above.
We define $\gstarp$ as follows: the vertex set is $V$ and the edge-set is  the same as $G_p$, except we remove all edges incident to $V^+$ and all edges $(u,v) \notin H$ that are incident to $V^-$. We define $\hstarp$ to be the subgraph of $H_p$ induced by the vertex set $V \setminus V^+$, that is, $\hstarp$ contains all edges of $H_p$ except those incident to $V^+$; see Figure~\ref{fig:stochastic-matching}.

For item~(\ref{prop:tGp}), note that $\gstarp$ differs
from $G_p$ by vertices in $V^+ \cup V^-$, so $\mu(\gstarp) \geq \mu(G_p)- |V^+| - |V^-|$. 
It is also clear that 
$\expect{\mu(G_p)} \geq p \cdot  \mu(G)$ (as each edge in $G$ is sampled w.p. $p$ in $G_p$). 
By Claim \ref{claim:stochastic}, 
$$\expect{\mu(\gstarp)} \geq \expect{\mu(G_p)} - \expect{|V^+| - |V^-|} \geq \expect{\mu(G_p)} - p^{3}\eps^{3}\mu(G) \geq (1 - \eps^{3})\expect{\mu(G_p)}.$$
The above equation then implies the desired $\expect{\mu(G_p)} \leq (1+\eps) \expect{\mu(\gstarp)}$.

For item~(\ref{prop:tHp}), let us verify \propone and \proptwo for EDCS $\hstarp$ of $\gstarp$. Neither $\hstarp$ nor $\gstarp$ have any edge 
incident on $V^+$ and hence we can ignore these vertices entirely. Thus, for all vertices $v$ we have $\deg{\hstarp}{v} \leq p \cdot \deg{H}{v} + \eps p \beta / 2$, and
for all $v \notin V^-$ we have  $\deg{\hstarp}{v} \geq p \cdot \deg{H}{v} - \eps p \beta / 2$. Moreover, recall that $\gstarp \setminus \hstarp$ contains
no edges incident to $V^-$. As such,

\begin{itemize}
	\item $\propone$ of EDCS $\hstarp$: For an edge $(u,v) \in \hstarp$,
	\begin{equation*}
	\deg{\hstarp}{u} + \deg{\hstarp}{v} \leq p \cdot \deg{H}{u} + p \cdot \deg{H}{v} + \eps p \beta \leq (1 + \eps) p \beta. \tag{by $\propone$ of EDCS $H$ of $G$} 
	\end{equation*}
	\item $\proptwo$ of EDCS $\hstarp$: For any edge $(u,v) \in \gstarp \setminus \hstarp$, we have $u,v \notin V^-$ so: 
	\begin{equation*}	
	\deg{\hstarp}{u} + \deg{\hstarp}{v} \geq p \cdot \deg{H}{u} + p \cdot \deg{H}{v} - \eps p \beta \geq (1 - 2\eps) p \beta. \tag{by $\proptwo$ of EDCS $H$ of $G$} 
	\end{equation*}
\end{itemize}
This concludes the proof of Lemma~\ref{lem:stochastic} (assuming Claim \ref{claim:stochastic}). \Qed{Lemma~\ref{lem:stochastic}} 

\end{proof}

All that remains is to prove Claim \ref{claim:stochastic}.

\begin{proof}[Proof of Claim \ref{claim:stochastic}]
Let us start by bounding the size of $V^+$. Consider any vertex $v \in V$. We know that $\deg{H}{v} \leq \beta$. Each edge then has probability $p$ of appearing
in $H_p$, so  $\expect{\deg{H_p}{v}} = p \cdot \deg{H}{v} \leq p \beta$. By the multiplicative Chernoff bound in Proposition \ref{prop:chernoff} with $\lambda = p \beta$: 
$$\Pr[v \in V^+] = \Pr[\deg{H_p}{v} \geq p \cdot \deg{H}{v} + \eps p \beta / 2] \leq e^{-O(\eps^2 p \beta)} \leq e^{-O(\log(\eps^{-1}p^{-1}))} \leq K^{-2} \eps^{10}p^{10},$$
where $K$ is a large constant and the last two inequalities follow from the fact that we set $\beta := \frac{C \log{(1/\eps p)}}{\eps^2 p}$, for large enough constant $C$. (Note that since constant $C$ is in the exponent, we can easily set $C$ large enough to achieve the final probability with a constant $K > C$.)
This probability bound shows that $\expect{|V^+|} \leq nK^{-2}\eps^{10}p^{10}$, but that is not quite good enough since we want a dependence on $\mu(G)$ instead of on $n$.
To achieve this, we observe that the total number of edges in $H$ is at most $\beta\mu(G)$: the reason is that $G$ has a vertex cover of size at most $2\mu(G)$,
and all vertices in $H$ have degree at most $\beta$ (by $\propone$ of EDCS $H$). There are thus at most $2\beta\mu(G)$ vertices that have non-zero degree in $H$, each of which
has at most a $\eps^{10}p^{10}$ probability of being in $V^+$; all vertices with zero degree in $H$ are clearly not in $V^+$ by definition. We thus have
$\expect{|V^+|} \leq 2\beta\mu(G) \cdot K^{-2}\eps^{10}p^{10} \leq K^{-1}\eps^{7}p^{7}\mu(G)$, where in the last inequality we use that $K > C$.

Let us now consider $V^-$. First let us bound the number of vertices $v \in V^-$ for which $\deg{H_p}{v} < p \cdot \deg{H}{v} - \eps p \beta / 2$. By an analogous argument 
to the one above, we have that the expected number of such vertices is at most $\eps^{7}p^{7}\mu(G)$. A vertex can also end up in $V^-$ because it has a 
neighbor in $V^+$ in $H$. But each vertex in $H$ has degree at most $\beta$ so we have $$\expect{|V^-|} \leq 
\eps^{7}p^{7}\mu(G) + \beta \expect{|V^+|} \leq  \eps^{4}p^{4}\mu(G),$$
where the last inequality again uses that $K > C$.
\end{proof}

\begin{remark}\label{rem:independence}
	Interestingly, our result in Theorem~\ref{thm:stochastic} continues to hold as it is even when the edges sampled in realizations of $G_p$ are only $\Theta(1/p)$-wise independent, by simply using a Chernoff bound for bounded-independence random variables (see, e.g.~\cite{SchmidtSS95}) in the proof of Claim~\ref{claim:stochastic}. Allowing correlation in the process of edge sampling is highly relevant to the main application of this problem to the kidney exchange setting (see~\cite{BlumDHPSS15}). To our knowledge, 
	our algorithm is the first to work with such a little amount of independence between the edges in realizations. 
\end{remark}

\newcommand{\Fstar}{\ensuremath{F^{\star}}}

\newcommand{\mum}{\mu_{\textnormal{min}}}

\section{A Fault-Tolerant Subgraph for Matching}\label{sec:fault-tolerant}

In the fault-tolerant matching problem, we are given a graph $G(V,E)$ and an integer $f \geq 1$, and our goal is to compute a subgraph $H$ of $G$, named an $f$-tolerant subgraph, such that for any subset $F \subseteq E$ of size $f$,  
$H \setminus F$ contains an approximate maximum matching of $G \setminus F$. We show that, % an EDCS of $G$ with appropriate parameters based on $f$ is an $f$-tolerant subgraph.

\begin{theorem}[Formalizing Result~\ref{res:fault-tolerant}]\label{thm:fault-tolerant}
	There exists a deterministic poly-time algorithm that given any $\eps > 0$ and integer $f \geq 1$, computes a $(3/2+\eps)$-approximate $f$-tolerant subgraph $H$ of any given graph $G$ with $O(\eps^{-2} \cdot \paren{n\log{(1/\eps)}+ f})$ edges. 
\end{theorem}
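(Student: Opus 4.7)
The plan is to follow the same template used in Sections~\ref{sec:communication} and~\ref{sec:stochastic}: compute an EDCS $H$ of $G$ and, for any adversarial set $F\subseteq E$ with $|F|\le f$, exhibit subgraphs $\tH\subseteq H\setminus F$ and $\tG\subseteq G\setminus F$ such that $\tH$ is an $\EDCS{\tG,\beta,(1-O(\eps))\beta}$ and $\mu(\tG)\ge (1-\eps)\mu(G\setminus F)$. Applying Lemma~\ref{lem:general-edcs-matching} to $\tH$ and $\tG$ then yields the desired $(3/2+\eps)$-approximation, since $\tH\subseteq H\setminus F$ implies $\mu(\tH)\le \mu(H\setminus F)$.

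Since an EDCS has $O(n\beta)$ edges, to match the target edge count I plan to take $\beta=\Theta\bigl(\eps^{-2}(\log(1/\eps)+f/n)\bigr)$ and $\beta^-=\beta-1$. The extra $f/n$ term in $\beta$ is what makes $H$ robust enough to absorb the $f$ adversarial deletions, while keeping $|E_H|=O(n\beta)=O(\eps^{-2}(n\log(1/\eps)+f))$ within the required budget; existence of this EDCS follows from Proposition~\ref{prop:edcs-exists}.

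To define $\tH$ and $\tG$, let $V_{\mathrm{bad}}$ be the set of vertices incident to more than $\eps\beta/4$ edges of $F$; a simple double-counting gives $|V_{\mathrm{bad}}|\le 8f/(\eps\beta)=O(n\eps)$. Take $\tH$ to be $H\setminus F$ with all edges touching $V_{\mathrm{bad}}$ removed, and $\tG$ analogously from $G\setminus F$. Verifying the EDCS properties is then routine and mirrors the preceding sections: Property (P1) holds because $\deg_{\tH}(v)\le\deg_H(v)\le\beta$, and Property (P2) because any edge $(u,v)\in\tG\setminus\tH$ must already lie in $G\setminus H$ with both endpoints outside $V_{\mathrm{bad}}$, so that $\deg_{\tH}(u)+\deg_{\tH}(v)\ge\beta^- -\eps\beta/2\ge (1-O(\eps))\beta$.

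The main obstacle is bounding the gap between $\mu(\tG)$ and $\mu(G\setminus F)$. Since $\tG$ differs from $G\setminus F$ only by edges incident to $V_{\mathrm{bad}}$, we have $\mu(\tG)\ge\mu(G\setminus F)-|V_{\mathrm{bad}}|$, so the remaining task is to show $|V_{\mathrm{bad}}|\le\eps\cdot\mu(G\setminus F)$. Our bound $|V_{\mathrm{bad}}|=O(n\eps)$ makes this immediate in the ``dense'' regime $\mu(G\setminus F)=\Omega(n)$, but the hard case is when $\mu(G\setminus F)\ll n$ and $f$ is large simultaneously. The plan for this regime is to exploit the fact that $\mu(G\setminus F)<\mu(G)/2$ forces $f\ge\mu(G)/2$, so $G$ admits a vertex cover of size $O(f)$; restricted to this small cover, the $f/n$ contribution to $\beta$ is already large compared to the number of deletions that can touch it, and a more refined direct argument (rather than the crude $V_{\mathrm{bad}}$ abstraction) should absorb them. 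Making this case analysis precise---or, alternatively, augmenting $H$ with a small backup structure of $O(f/\eps^2)$ additional edges within the allotted edge budget---is where the main work of the proof lies.
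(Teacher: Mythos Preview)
Your overall template is right, but the choice of $\beta$ is the wrong one, and this is exactly why you are stuck on the ``hard case''. In the paper the parameter is set to $\beta=\Theta\bigl(\eps^{-2}\log(1/\eps)+\eps^{-2}f/\mum\bigr)$ where $\mum:=\min_{|F|\le f}\mu(G\setminus F)$, not $f/n$. With this choice the bad-vertex bound becomes $|V_{\mathrm{bad}}|\le 2f/(\eps\beta)\le \eps\,\mum\le \eps\,\mu(G\setminus F)$ for \emph{every} $F$, so the case analysis you are worried about simply disappears. The vertex-cover idea you sketched is then used not for correctness but for the \emph{size} bound: since $G\setminus F^\star$ (with $\mu(G\setminus F^\star)=\mum$) has a vertex cover of size $2\mum$, one gets $|E_H|\le 2\mum\cdot\beta+f=O(\eps^{-2}(f+n\log(1/\eps)))$, rather than the naive $O(n\beta)$ which would blow up. Since $\mum$ is not obviously poly-time computable, the paper finishes by binary-searching over $\beta$ for the smallest value that yields an EDCS with $\Theta(\eps^{-2}(f+n\log(1/\eps)))$ edges.

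There is also a smaller bug in your construction of $\tH$: by deleting from $\tH$ all $H$-edges touching $V_{\mathrm{bad}}$, a good vertex $v\notin V_{\mathrm{bad}}$ can lose arbitrarily many $H$-neighbors (those lying in $V_{\mathrm{bad}}$), so your claimed lower bound $\deg_{\tH}(v)\ge\deg_H(v)-\eps\beta/4$ is not justified and the verification of Property~(P2) breaks. The paper avoids this by keeping $H_F:=H\setminus F$ intact and only pruning $\tG_F$: one removes from $G\setminus F$ those edges \emph{outside} $H_F$ that touch a bad vertex. Then (P1) is trivial since $\deg_{H_F}\le\deg_H$, and (P2) holds because any edge of $\tG_F\setminus H_F$ has both endpoints good, each losing at most $\eps\beta$ in degree.
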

\vspace{-0.25cm}
The algorithm in Theorem~\ref{thm:fault-tolerant} simply computes an EDCS of the input graph as follows: 

\begin{tbox}

\underline{An algorithm for the fault-tolerant matching problem.} 

\begin{enumerate}
\item Define $\mum := \min_{F} \Paren{\mu(G \setminus F)}$, where $F$ is taken over all subsets of $E$ with size $f$. 

\item Output $H:=\EDCS{G,\beta,\beta-1}$ for $\beta := \frac{C \cdot f}{\eps^2 \cdot \mum} + \frac{C \cdot \log{(1/\eps)}}{\eps^2}$ for a large constant $C > 0$.
\end{enumerate}

\end{tbox}
By Proposition~\ref{prop:edcs-exists}, the EDCS $H$ in the above algorithm always exists and can be found in polynomial time. 
The above algorithm as stated however is not a polynomial time algorithm because it is not clear how to compute the quantity $\mum$. Nevertheless, for simplicity, we work with the above algorithm throughout this section, and at the end show how to fix this problem and obtain a poly-time algorithm. We start by proving that the subgraph $H$ only has $O(f + n)$ edges. 

\begin{lemma}\label{lem:ft-size}
	The total number of edges in $H$ is $O(\frac{f}{\eps^2} + n \cdot \frac{\log{(1/\eps)}}{\eps^2})$. 
\end{lemma}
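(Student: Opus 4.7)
The plan is to combine two complementary size bounds on the EDCS $H$: one from the max-degree property of $\propone$ together with a vertex-cover-from-matching argument, and one from the observation that the input graph $G$ itself is sparse because $G\setminus F^*$ has small matching number.

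Before using either bound, I will establish the key inequality $\mu(G) \le \mum + f$. Indeed, for any maximum matching $M^*$ of $G$, at most $|F^*| = f$ of its edges lie in $F^*$, while $M^* \setminus F^*$ is a matching contained in $G \setminus F^*$, which by definition of $\mum$ has size at most $\mum$. Summing gives $|M^*| \le f + \mum$.

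From this I derive two bounds. \emph{Bound A (vertex cover from matching).} By $\propone$ of the EDCS, every vertex of $H$ has $H$-degree at most $\beta$; meanwhile $V(M^*)$ is a vertex cover of $G$, hence of $H\subseteq G$. So $|E_H|\le \sum_{v\in V(M^*)} \deg_H(v) \le 2\mu(G)\cdot\beta \le 2(\mum+f)\beta.$ \emph{Bound B ($H$ lives inside a sparse $G$).} Since $G \setminus F^*$ has matching number $\mum$, it admits a vertex cover of size at most $2\mum$, so $|E\setminus F^*|\le 2\mum\cdot n$, giving $|E_H|\le|E|\le 2n\mum + f$.

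The real work is plugging $\beta = \tfrac{Cf}{\eps^2\mum} + \tfrac{C\log(1/\eps)}{\eps^2}$ into Bound A and verifying that the resulting expression has the claimed form. Expanding yields four summands, and three of them are immediately $O\bigl(\tfrac{f}{\eps^2} + \tfrac{n\log(1/\eps)}{\eps^2}\bigr)$ after using $\mum \le n/2$. The remaining term is $\tfrac{2Cf^2}{\eps^2\mum}$, which is the only place the denominator $\mum$ can be small. I handle it by a case split on $\mum$ vs.\ $f$: when $\mum \ge f$ we have $f^2/\mum \le f$ directly, absorbed into the $f/\eps^2$ budget; when $\mum < f$ I fall back to Bound B, where $2n\mum + f$ is controlled by $f$ together with $n\log(1/\eps)/\eps^2$ (the point being that small $\mum$ inflates $\beta$ only at the cost of making $G$ genuinely sparse).

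The step I expect to be most delicate is the intermediate regime where $\mum$ is simultaneously smaller than $f$ and larger than roughly $\log(1/\eps)/\eps^2$; here neither bound is individually tight, and I will need to compare Bounds A and B more carefully (effectively taking their minimum, using $\mum+f \le 2f$ to simplify Bound A, and using $\mum \le n/2$ to simplify Bound B). Everything else is bookkeeping with the expansion of $\beta$.
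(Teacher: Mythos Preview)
Your two bounds are correct as stated, and the inequality $\mu(G)\le\mum+f$ is fine, but the case analysis does not close. In the intermediate regime you single out---$\log(1/\eps)/\eps^2 \ll \mum \ll f$---neither Bound~A nor Bound~B is within a constant of the target, and taking the minimum does not help. Concretely, fix $\eps$ and let $n\to\infty$ with $f=n$ and $\mum=\sqrt{n}$ (such parameters are realizable: take $\sqrt{n}$ hub vertices, each adjacent to all other vertices, plus $n$ additional edges among non-hubs forming $\Fstar$). Then $\beta\approx C\sqrt{n}/\eps^2$, so Bound~A gives $2(\mum+f)\beta\approx 2Cn^{3/2}/\eps^2$ while Bound~B gives $2n\mum+f\approx 2n^{3/2}$; the target is $\Theta(n/\eps^2)$. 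Hence $\min(A,B)=\Theta(n^{3/2})$, which exceeds the target by a factor of $\Theta(\sqrt{n}\,\eps^2)$, unbounded. (A smaller issue: your claim that the term $2Cf\log(1/\eps)/\eps^2$ is ``immediately'' absorbed also fails once $f\gg n$; it is only controlled in your Case~1 via $f\le\mum\le n/2$.)

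The missing idea, and what the paper does, is to use the vertex cover of $G\setminus\Fstar$ rather than of $G$. If $\Mstar$ is a maximum matching of $G\setminus\Fstar$, then $V(\Mstar)$ has size $2\mum$ and covers every edge of $G$ except possibly those in $\Fstar$; in particular it covers every edge of $H\setminus\Fstar$. Since each vertex has $H$-degree at most $\beta$ by \propone, this yields the single clean bound $|E_H|\le 2\mum\beta + f$. Substituting $\beta$ now gives exactly $\tfrac{2Cf}{\eps^2}+\tfrac{2C\mum\log(1/\eps)}{\eps^2}+f$, and all three terms are directly $O(f/\eps^2 + n\log(1/\eps)/\eps^2)$ with no case split. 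The point is that separating out the $\le f$ edges of $H\cap\Fstar$ \emph{additively} avoids the cross term $f\cdot\beta$ that your Bound~A incurs.
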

\begin{proof}
	Let $\Fstar$ be a subset of $E$ with size $f$ such that $\mum = \mu(G \setminus \Fstar)$. Let $\Mstar$ be a maximum matching of size $\mum$ in $G \setminus \Fstar$. Note that $V(\Mstar)$ is a vertex cover for 
	$G \setminus \Fstar$. This means that all edges in $G$ except for $f$ of them are incident on $V(\Mstar)$. As no vertex in the EDCS $H$ can have degree more than $\beta$ by $\propone$ of EDCS, the degree
	of vertices in $V(\Mstar)$ in $E \setminus \Fstar$ is at most $\beta$. This implies that: 
	\begin{align*}
		\card{E_H} &\leq \card{V(\Mstar)} \cdot \beta + \card{\Fstar} \leq 2\mum \cdot \paren{\frac{C \cdot f}{\eps^2 \cdot \mum} + \frac{C \cdot \log{(1/\eps)}}{\eps^2}} + f = O(\frac{f}{\eps^2} + n \cdot \frac{\log{(1/\eps)}}{\eps^2}),
	\end{align*}
	finalizing the proof.
\end{proof}

We now prove the correctness of the algorithm in the following lemma. 

\begin{lemma}\label{lem:fault-tolerant}
	Fix any subset $F \subseteq E$ of size $f$ and define $G_F := G \setminus F$ and $H_F := H \setminus F$. Then, $\MM{G_F} \leq (3/2 + O(\eps)) \cdot \MM{H_F}$.
\end{lemma}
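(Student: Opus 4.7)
The plan is to follow the unified strategy sketched in the paper's introduction: exhibit $\tH \subseteq H_F$ and $\tG \subseteq G_F$ such that $\tH$ is an EDCS of $\tG$ at parameters satisfying Lemma~\ref{lem:general-edcs-matching}, and such that $\MM{\tG} \geq (1-\eps)\MM{G_F}$. Chaining these two facts then gives $\MM{G_F} \leq (1+O(\eps))\MM{\tG} \leq (1+O(\eps))(3/2+O(\eps))\MM{\tH} \leq (3/2+O(\eps))\MM{H_F}$, which is the desired conclusion.

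For the construction I would keep $\tH$ as large as possible by setting $\tH := H_F$ and then define
\[
\tG \,:=\, \tH \,\cup\, \bigl\{(u,v) \in G_F \setminus \tH \,:\, \deg{\tH}{u}+\deg{\tH}{v} \geq (1-\eps)\beta\bigr\}.
\]
Property~(P1) of EDCS for $\tH$ on $\tG$ with threshold $\beta$ is inherited directly from $H$, since $\tH\subseteq H$ and removing edges only decreases degrees. Property~(P2) of EDCS for $\tH$ on $\tG$ with threshold $(1-\eps)\beta$ holds by construction. Thus $\tH$ is an $\EDCS{\tG,\beta,(1-\eps)\beta}$, and the hypothesis $\beta\geq 8\eps^{-2}\log(1/\eps)$ of Lemma~\ref{lem:general-edcs-matching} (applied with $\lambda=\eps$) is secured by the $C\log(1/\eps)/\eps^2$ term in the definition of $\beta$ for $C$ large enough.

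The main quantitative step is to lower bound $\MM{\tG}$ in terms of $\MM{G_F}$. I would fix a maximum matching $M^*$ of $G_F$ and, for each $x\in V$, let $\delta(x):=|\{(x,y)\in F\cap H\}|$ count the $H$-edges at $x$ that were deleted by $F$. An edge $(u,v)\in M^*$ can fail to lie in $\tG$ only if $(u,v)\in G_F\setminus H_F$, which together with $M^*\cap F=\emptyset$ forces $(u,v)\in G\setminus H$ in the original graph. Property~(P2) of the EDCS $H$ then gives $\deg{H}{u}+\deg{H}{v}\geq \beta-1$, and since $\deg{\tH}{x}=\deg{H}{x}-\delta(x)$, the failure of the degree condition $\deg{\tH}{u}+\deg{\tH}{v}\geq (1-\eps)\beta$ forces $\delta(u)+\delta(v)=\Omega(\eps\beta)$. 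The key observation is that $M^*$ is a matching, so $\sum_{(u,v)\in M^*}(\delta(u)+\delta(v))\leq \sum_{x\in V}\delta(x)=2|F\cap H|\leq 2f$; this caps the number of excluded $M^*$-edges by $O(f/(\eps\beta))$. Plugging in $\beta\geq Cf/(\eps^2\mum)$ from the algorithm and recalling $\mum\leq \MM{G_F}$ yields the desired $\MM{\tG}\geq (1-\eps)\MM{G_F}$ for $C$ large enough.

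The main obstacle is precisely the charging argument above: the threshold $(1-\eps)\beta$ used in the definition of $\tG$ must be tuned so that each excluded $M^*$-edge is forced to charge $\Omega(\eps\beta)$ worth of deleted $H$-degree to its two endpoints, while at the same time the ratio between the two EDCS parameters for $\tH$ on $\tG$ stays within $1+O(\eps)$ so that Lemma~\ref{lem:general-edcs-matching} is applicable. Once this is set up correctly, the fact that $M^*$ is a matching caps the total charge at $2|F\cap H|\leq 2f$ rather than at something scaling with $|V|$, which is exactly what allows the $f/(\eps^2\mum)$ term in $\beta$ to absorb the loss; invoking Lemma~\ref{lem:general-edcs-matching} then yields the claimed $(3/2+O(\eps))$ approximation.
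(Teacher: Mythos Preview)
Your proposal is correct and follows essentially the same approach as the paper. The only cosmetic difference is that the paper defines $\tG$ by deleting edges of $G_F\setminus H_F$ incident to ``bad'' vertices (those with $\deg{H}{v}-\deg{H_F}{v}>\eps\beta$), whereas you define $\tG$ directly by the edge-degree threshold $\deg{\tH}{u}+\deg{\tH}{v}\geq(1-\eps)\beta$; both lead to the same charging argument (total lost $H$-degree is at most $2f$, so at most $O(f/(\eps\beta))\leq \eps\,\mu(G_F)$ edges of $M^*$ are dropped) and to $H_F$ being an EDCS of $\tG$ with parameters $\beta$ and $(1-O(\eps))\beta$.
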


We first need some definitions. We say that a vertex $v \in V$ is \emph{bad} iff $\deg{H_F}{v} < \deg{H}{v} - \eps\beta$, i.e., at least $\eps\beta$ edges incident on $v$ (in $H$) are deleted by $F$. 
We use $B_F$ to denote the set of bad vertices with respect to $F$, and bound $\card{B_F}$ in the following claim. %The following simple argument implies that not many vertices can be bad in $H_F$. 

\begin{claim}\label{clm:bad-bound}
	Number of bad vertices in $H_F$ is at most $\card{B_F} \leq \eps \cdot \mu(G_F)$.
\end{claim}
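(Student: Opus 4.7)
The plan is to double-count edges of $F$ incident on bad vertices. By definition, each bad vertex $v \in B_F$ has at least $\eps\beta$ of its $H$-incident edges deleted by $F$, so
\[
\sum_{v \in B_F} \eps\beta \;\leq\; \sum_{v \in B_F} \card{F \cap E_H(\set{v})} \;\leq\; 2\card{F} \;=\; 2f,
\]
since each edge of $F$ has at most two endpoints and hence can be counted at most twice in the sum. This immediately yields $\card{B_F} \leq \tfrac{2f}{\eps\beta}$.

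Next I will cash in the choice of $\beta$. By construction $\beta \geq \tfrac{C f}{\eps^2 \mum}$, so $\tfrac{f}{\beta} \leq \tfrac{\eps^2 \mum}{C}$, giving $\card{B_F} \leq \tfrac{2\eps\mum}{C}$. Finally, since $F \subseteq E$ has size $f$, we have $\mum = \min_{F'} \mu(G \setminus F') \leq \mu(G_F)$ by definition of $\mum$. Combining, $\card{B_F} \leq \tfrac{2\eps}{C}\cdot \mu(G_F) \leq \eps \cdot \mu(G_F)$ for $C \geq 2$, which is the claim.

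There is no real obstacle here; the only thing to be careful about is the direction of the inequality $\mum \leq \mu(G_F)$ (we are removing \emph{some} $F$ of size $f$, and $\mum$ is defined as the minimum over all such choices), and the factor of $2$ from each deleted edge potentially spoiling two vertices, which is exactly what the large constant $C$ in the definition of $\beta$ is designed to absorb.
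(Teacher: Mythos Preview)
Your proof is correct and essentially identical to the paper's: both argue that each deleted edge can contribute to at most two bad vertices, yielding $\card{B_F} \leq \tfrac{2f}{\eps\beta}$, and then substitute the lower bound $\beta \geq \tfrac{Cf}{\eps^2\mum}$ together with $\mum \leq \mu(G_F)$ to conclude. The only cosmetic difference is that you phrase the counting as a sum over $B_F$, while the paper states it in words.
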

\begin{proof}
	Any deleted edge can decrease the degrees of exactly two vertices. Any vertex becomes bad iff at least $\eps\beta$ edges incident on it from $H_F$ are removed. As such, 
	%\begin{align*}
	$	\card{B_F} \leq \frac{2f}{\eps \cdot \beta} \leq \frac{2f \cdot \eps^2 \cdot \mum}{\eps \cdot C \cdot f} \leq \eps \cdot \mu(G_F),$
	%\end{align*}
	for sufficiently large $C > 0$, and since $\mu(G_F) \geq \mum$ by definition of $\mum$. \Qed{Claim~\ref{clm:bad-bound}}
	
\end{proof}

\begin{proof}[Proof of Lemma~\ref{lem:fault-tolerant}]
Define a subgraph $\tG_F \subseteq G_F$ as follows: $V(\tG_F) = V(G_F)$ ($=V(G)$) and edges in $\tG_F$ are all edges in $G_F$ except that we remove any edge $(u,v) \in G_F$ such that
$(u,v) \notin H_F$ and either of $u$ or $v$ is a bad vertex. We prove that $\mu(\tG_F)$ is at least $(1-\eps)$ fraction of $\mu(G_F)$, and moreover, $H_F$ is an EDCS of $\tG_F$ with appropriate parameters. 
We can then apply Lemma~\ref{lem:general-edcs-matching} to obtain that $\mu(G_F) \leq (1+2\eps) \mu(\tG_F) \leq (1+\eps)\cdot (3/2+O(\eps)) \mu(H_F)$, finalizing the proof.  

We first prove the bound on $\mu(\tG_F)$. Fix any maximum matching $M$ in $G_F$. It can have at most $\card{B_F}$ edges incident on vertices of $B_F$. Hence, even if we remove all edges incident on $B_F$, the size of this matching would be
at least $\mu(G_F) - \eps \cdot \mu(G_F)$, by the bound of $\card{B_F} \leq \eps \cdot \mu(G_F)$ in Claim~\ref{clm:bad-bound}. However, this matching belongs to $\tG_F$ entirely by the definition of this subgraph, and hence we have, 
$\mu(G_F) \leq (1+2\eps) \mu (\tG_F)$. 

We now prove that $H_F$ is an $\EDCS{\tG_F,\beta,(1-2\eps)\beta-1}$ of $\tG_F$. It suffices to prove the two properties of EDCS for $H_F$ using the fact that $\deg{H_F}{v} \in [\deg{H}{v} - \eps \beta,\deg{H}{v}]$ for vertices in $V \setminus B_F$, and that all edges incident on $B_F$ in $\tG_F$ also belong to $H_F$.  
\begin{itemize}
	\item \propone of EDCS $H_F$ of $\tG_F$: For any edge $(u,v) \in H_F$:
	\begin{align*}
		\deg{H_F}{u} + \deg{H_F}{v} \leq \deg{H}{u} + \deg{H}{v} \leq \beta. \tag{by $\propone$ of EDCS $H$ of $G$}
	\end{align*}
	\item \proptwo of EDCS $H_F$ of $\tG_F$: For any edge $(u,v) \in \tG_F \setminus H_F$ both $u,v \in V \setminus B_F$ and so: 
	\begin{align*}
		\deg{H_F}{u} + \deg{H_F}{v} \geq \deg{H}{u} + \deg{H}{v} - 2\eps\beta \geq (1-2\eps)\beta - 1. \tag{by $\proptwo$ of EDCS $H$ of $G$ as $(u,v)$ is missing from $H$}
	\end{align*}
\end{itemize}
As such, $H_F$ is an $\EDCS{\tG_F,\beta,(1-2\eps)\beta-1}$ of $\tG_F$ and by the lower bound on value of $\beta$ in the algorithm (the second term in definition of $\beta$), we can apply Lemma~\ref{lem:general-edcs-matching},
and obtain that $\MM{\tG_F} \leq (3/2+O(\eps)) \cdot \MM{H_F}$, finalizing the proof. 
\end{proof}

Theorem~\ref{thm:fault-tolerant} now follows from Lemmas~\ref{lem:ft-size} and~\ref{lem:fault-tolerant} by re-parametrizing $\eps$ to a sufficiently smaller constant factor of $\eps$ (by picking the integer $C$ large enough) modulo the fact that the algorithm designed in this section is not a polynomial time algorithm. To make the algorithm polynomial time, we only
need to make a simple modification: instead of finding $\mum$ explicitly, we find the smallest value of $\beta$ (by searching over all $n$ possible choices of $\beta$, or by doing a binary search) such that the EDCS $H$ has 
at least $\frac{2 \cdot C \cdot f}{\eps^2} + \frac{n \cdot C \cdot \log{(1/\eps)}}{\eps^2}$ many edges. By the proof of Lemma~\ref{lem:ft-size}, any EDCS of $G$ can have at most 
$2\mum \cdot \beta + f$ edges. This implies that the chosen $\beta \geq \frac{C \cdot f}{\eps^2 \cdot \mum} + \frac{C \cdot \log{(1/\eps)}}{\eps^2}$ as needed in the algorithm. This concludes the proof, 
as by definition of $\beta$, $H$ has $O(\frac{C \cdot f}{\eps^2} + \frac{n \cdot C \cdot \log{(1/\eps)}}{\eps^2})$ many edges, and hence satisfies the sparsity requirements of Theorem \ref{thm:fault-tolerant}.

\subsection*{Acknowledgements}
Sepehr Assadi is grateful to his advisor Sanjeev Khanna for many helpful discussions, and to Soheil Behnezhad for sharing a write-up of~\cite{BehnezhadFHR19}.

\bibliographystyle{abbrv}
\bibliography{general}

\begin{thebibliography}{10}

\bibitem{AlonS04}
N.~Alon and J.~H. Spencer.
\newblock {\em The probabilistic method}.
\newblock John Wiley \& Sons, 2004.

\bibitem{AssadiBBMS17}
S.~Assadi, M.~Bateni, A.~Bernstein, V.~S. Mirrokni, and C.~Stein.
\newblock Coresets meet {EDCS:} algorithms for matching and vertex cover on
  massive graphs.
\newblock {\em CoRR}, abs/1711.03076. To appear in SODA 2019, 2017.

\bibitem{AssadiKL16ec}
S.~Assadi, S.~Khanna, and Y.~Li.
\newblock The stochastic matching problem with (very) few queries.
\newblock In {\em Proceedings of the 2016 {ACM} Conference on Economics and
  Computation, {EC} '16, Maastricht, The Netherlands, July 24-28, 2016}, pages
  43--60, 2016.

\bibitem{AssadiKL17}
S.~Assadi, S.~Khanna, and Y.~Li.
\newblock On estimating maximum matching size in graph streams.
\newblock In {\em Proceedings of the Twenty-Eighth Annual {ACM-SIAM} Symposium
  on Discrete Algorithms, {SODA} 2017, Barcelona, Spain, Hotel Porta Fira,
  January 16-19}, pages 1723--1742, 2017.

\bibitem{AssadiKL17ec}
S.~Assadi, S.~Khanna, and Y.~Li.
\newblock The stochastic matching problem: Beating half with a non-adaptive
  algorithm.
\newblock In {\em Proceedings of the 2017 {ACM} Conference on Economics and
  Computation, {EC} '17, Cambridge, MA, USA, June 26-30, 2017}, pages 99--116,
  2017.

\bibitem{Awerbuch85}
B.~Awerbuch.
\newblock Complexity of network synchronization.
\newblock {\em J. {ACM}}, 32(4):804--823, 1985.

\bibitem{BaswanaCR16}
S.~Baswana, K.~Choudhary, and L.~Roditty.
\newblock Fault tolerant subgraph for single source reachability: generic and
  optimal.
\newblock In {\em Proceedings of the 48th Annual {ACM} {SIGACT} Symposium on
  Theory of Computing, {STOC} 2016, Cambridge, MA, USA, June 18-21, 2016},
  pages 509--518, 2016.

\bibitem{BatsonSS09}
J.~D. Batson, D.~A. Spielman, and N.~Srivastava.
\newblock Twice-ramanujan sparsifiers.
\newblock In {\em Proceedings of the 41st Annual {ACM} Symposium on Theory of
  Computing, {STOC} 2009, Bethesda, MD, USA, May 31 - June 2, 2009}, pages
  255--262, 2009.

\bibitem{BehnezhadFHR19}
S.~Behnezhad, A.~Farhadi, M.~Hajiaghayi, and N.~Reyhani.
\newblock Stochastic matching with few queries: New algorithms and tools.
\newblock In {\em Manuscript. To appear in SODA 2019.}, 2018.

\bibitem{BehnezhadR18}
S.~Behnezhad and N.~Reyhani.
\newblock Almost optimal stochastic weighted matching with few queries.
\newblock In {\em Proceedings of the 2018 {ACM} Conference on Economics and
  Computation, Ithaca, NY, USA, June 18-22, 2018}, pages 235--249, 2018.

\bibitem{BenczurK96}
A.~A. Bencz{\'{u}}r and D.~R. Karger.
\newblock Approximating \emph{s-t} minimum cuts in
  \emph{{\~{O}}}(\emph{n}\({}^{\mbox{2}}\)) time.
\newblock In {\em Proceedings of the Twenty-Eighth Annual {ACM} Symposium on
  the Theory of Computing, Philadelphia, Pennsylvania, USA, May 22-24, 1996},
  pages 47--55, 1996.

\bibitem{Berge62}
C.~Berge.
\newblock {\em The theory of graphs}.
\newblock Courier Corporation, 1962.

\bibitem{BernsteinS15}
A.~Bernstein and C.~Stein.
\newblock Fully dynamic matching in bipartite graphs.
\newblock In {\em Automata, Languages, and Programming - 42nd International
  Colloquium, {ICALP} 2015, Kyoto, Japan, July 6-10, 2015, Proceedings, Part
  {I}}, pages 167--179, 2015.

\bibitem{BernsteinS16}
A.~Bernstein and C.~Stein.
\newblock Faster fully dynamic matchings with small approximation ratios.
\newblock In {\em Proceedings of the Twenty-Seventh Annual {ACM-SIAM} Symposium
  on Discrete Algorithms, {SODA} 2016, Arlington, VA, USA, January 10-12,
  2016}, pages 692--711, 2016.

\bibitem{BlumDHPSS15}
A.~Blum, J.~P. Dickerson, N.~Haghtalab, A.~D. Procaccia, T.~Sandholm, and
  A.~Sharma.
\newblock Ignorance is almost bliss: Near-optimal stochastic matching with few
  queries.
\newblock In {\em Proceedings of the Sixteenth {ACM} Conference on Economics
  and Computation, {EC} '15, Portland, OR, USA, June 15-19, 2015}, pages
  325--342, 2015.

\bibitem{BodwinDPW18}
G.~Bodwin, M.~Dinitz, M.~Parter, and V.~V. Williams.
\newblock Optimal vertex fault tolerant spanners (for fixed stretch).
\newblock In {\em Proceedings of the Twenty-Ninth Annual {ACM-SIAM} Symposium
  on Discrete Algorithms, {SODA} 2018, New Orleans, LA, USA, January 7-10,
  2018}, pages 1884--1900, 2018.

\bibitem{BodwinGPW17}
G.~Bodwin, F.~Grandoni, M.~Parter, and V.~V. Williams.
\newblock Preserving distances in very faulty graphs.
\newblock In {\em 44th International Colloquium on Automata, Languages, and
  Programming, {ICALP} 2017, July 10-14, 2017, Warsaw, Poland}, pages
  73:1--73:14, 2017.

\bibitem{BollobasCE03}
B.~Bollob{\'{a}}s, D.~Coppersmith, and M.~Elkin.
\newblock Sparse distance preservers and additive spanners.
\newblock In {\em Proceedings of the Fourteenth Annual {ACM-SIAM} Symposium on
  Discrete Algorithms, January 12-14, 2003, Baltimore, Maryland, {USA.}}, pages
  414--423, 2003.

\bibitem{ChechikLPR09}
S.~Chechik, M.~Langberg, D.~Peleg, and L.~Roditty.
\newblock Fault-tolerant spanners for general graphs.
\newblock In {\em Proceedings of the 41st Annual {ACM} Symposium on Theory of
  Computing, {STOC} 2009, Bethesda, MD, USA, May 31 - June 2, 2009}, pages
  435--444, 2009.

\bibitem{CoppersmithE06}
D.~Coppersmith and M.~Elkin.
\newblock Sparse sourcewise and pairwise distance preservers.
\newblock {\em {SIAM} J. Discrete Math.}, 20(2):463--501, 2006.

\bibitem{ErdosL73}
P.~Erd{\H{o}}s and L.~Lov{\'a}sz.
\newblock Problems and results on 3-chromatic hypergraphs and some related
  questions.
\newblock In {\em COLLOQUIA MATHEMATICA SOCIETATIS JANOS BOLYAI 10. INFINITE
  AND FINITE SETS, KESZTHELY (HUNGARY)}. Citeseer, 1973.

\bibitem{FungHHP11}
W.~S. Fung, R.~Hariharan, N.~J.~A. Harvey, and D.~Panigrahi.
\newblock A general framework for graph sparsification.
\newblock In {\em Proceedings of the 43rd {ACM} Symposium on Theory of
  Computing, {STOC} 2011, San Jose, CA, USA, 6-8 June 2011}, pages 71--80,
  2011.

\bibitem{GoelKK12}
A.~Goel, M.~Kapralov, and S.~Khanna.
\newblock On the communication and streaming complexity of maximum bipartite
  matching.
\newblock In {\em Proceedings of the Twenty-third Annual ACM-SIAM Symposium on
  Discrete Algorithms}, SODA '12, pages 468--485. SIAM, 2012.

\bibitem{Hall35}
P.~Hall.
\newblock On representatives of subsets.
\newblock {\em Journal of the London Mathematical Society}, 1(1):26--30, 1935.

\bibitem{Kapralov13}
M.~Kapralov.
\newblock Better bounds for matchings in the streaming model.
\newblock In {\em Proceedings of the Twenty-Fourth Annual {ACM-SIAM} Symposium
  on Discrete Algorithms, {SODA} 2013, New Orleans, Louisiana, USA, January
  6-8, 2013}, pages 1679--1697, 2013.

\bibitem{Karger94}
D.~R. Karger.
\newblock Random sampling in cut, flow, and network design problems.
\newblock In {\em Proceedings of the Twenty-Sixth Annual {ACM} Symposium on
  Theory of Computing, 23-25 May 1994, Montr{\'{e}}al, Qu{\'{e}}bec, Canada},
  pages 648--657, 1994.

\bibitem{LeeS17}
E.~Lee and S.~Singla.
\newblock Maximum matching in the online batch-arrival model.
\newblock In {\em Integer Programming and Combinatorial Optimization - 19th
  International Conference, {IPCO} 2017, Waterloo, ON, Canada, June 26-28,
  2017, Proceedings}, pages 355--367, 2017.

\bibitem{Peleg09}
D.~Peleg.
\newblock As good as it gets: Competitive fault tolerance in network
  structures.
\newblock In {\em Stabilization, Safety, and Security of Distributed Systems,
  11th International Symposium, {SSS} 2009, Lyon, France, November 3-6, 2009.
  Proceedings}, pages 35--46, 2009.

\bibitem{PelegS89}
D.~Peleg and A.~A. Sch{\"{a}}ffer.
\newblock Graph spanners.
\newblock {\em Journal of Graph Theory}, 13(1):99--116, 1989.

\bibitem{RuszaS78}
I.~Z. Ruzsa and E.~Szemer{\'e}di.
\newblock Triple systems with no six points carrying three triangles.
\newblock {\em Combinatorics (Keszthely, 1976), Coll. Math. Soc. J. Bolyai},
  18:939--945, 1978.

\bibitem{SchmidtSS95}
J.~P. Schmidt, A.~Siegel, and A.~Srinivasan.
\newblock Chernoff-hoeffding bounds for applications with limited independence.
\newblock {\em {SIAM} J. Discrete Math.}, 8(2):223--250, 1995.

\bibitem{SpielmanT11}
D.~A. Spielman and S.~Teng.
\newblock Spectral sparsification of graphs.
\newblock {\em {SIAM} J. Comput.}, 40(4):981--1025, 2011.

\bibitem{Tutte47}
W.~T. Tutte.
\newblock The factorization of linear graphs.
\newblock {\em Journal of the London Mathematical Society}, 1(2):107--111,
  1947.

\bibitem{YamaguchiM18}
Y.~Yamaguchi and T.~Maehara.
\newblock Stochastic packing integer programs with few queries.
\newblock In {\em Proceedings of the Twenty-Ninth Annual {ACM-SIAM} Symposium
  on Discrete Algorithms, {SODA} 2018, New Orleans, LA, USA, January 7-10,
  2018}, pages 293--310, 2018.

\end{thebibliography}

%\clearpage
\appendix

\clearpage
\section{Missing Details and Proofs}\label{app:appendix}

\subsection{Proof of Proposition~\ref{prop:edcs-exists}}\label{app:edcs-exists}

We give the proof of this proposition following the argument of~\cite{AssadiBBMS17}, which itself was based on~\cite{BernsteinS16}. 

\begin{proof}
	
	We give a polynomial local search algorithm for constructing an EDCS $H$ of the graph $G$ which also implies the existence of $H$. 
	The algorithm is as follows. Start with empty graph $H$. While there exists an edge in $H$ or $G \setminus H$ that violates $\propone$ or $\proptwo$ of EDCS, respectively, {fix} this edge by removing it 
	from $H$ for the former or inserting it to $H$ for the latter. 
	
	We prove that this algorithm terminates after polynomial number of steps which implies both the existence of the EDCS as well as give a polynomial time algorithm for computing it. 
	We define the following potential function $\Phi$ for this task: 
	\begin{align*}
		\Phi_1(H) &:= \paren{\beta - 1/2} \cdot \sum_{u \in V(H)} \deg{H}{u}, \qquad \Phi_2(H) := \sum_{(u,v) \in E(H)} \paren{\deg{H}{u} + \deg{H}{v}}, \\
		\Phi(H) &:= \Phi_1(H) - \Phi_2(H). 
	\end{align*}
	We claim that after fixing each edge in $H$ in the algorithm, $\Phi$ increases by at least $1$. 
	Since max-value of $\Phi$ is $O(n \cdot \beta^2)$, this implies that this procedure terminates in $O(n \cdot \beta^2)$ steps.
	
	Let $(u,v)$ be the fixed edge at this step, $H_1$ be the subgraph before fixing the edge $(u,v)$, and $H_2$ be the resulting subgraph. 
	Suppose first that the edge $(u,v)$ was violating \propone of EDCS.  As the only change is in the degrees of vertices $u$ and $v$,
	$\Phi_1$ decreases by $(2\beta-1)$. On the other hand, $\deg{H_1}{u} + \deg{H_1}{v} \geq \beta+1$ originally (as $(u,v)$ was violating \propone of EDCS) and hence after removing $(u,v)$, $\Phi_2$ also
	decreases by $\beta+1$. Additionally, for each neighbor $w$ of $u$ and $v$ in $H_2$, after removing the edge $(u,v)$, $\deg{H_2}{w}$ decreases by one. As there are at least 
	$\deg{H_2}{u} + \deg{H_2}{v} = \deg{H_1}{u} + \deg{H_1}{v} -2 \geq \beta - 1$ choices for $w$, 
	this means that in total, $\Phi_2$ decreases by at least $(\beta + 1) + (\beta - 1) = 2\beta$. As a result, in this case $\Phi = \Phi_1 - \Phi_2$ increases by at least $1$ after fixing the edge $(u,v)$. 
	
	Now suppose that the edge $(u,v)$ was violating \proptwo of EDCS instead. In this case, degree of vertices $u$ and $v$ both increase by one, hence $\Phi_1$ increases by $2\beta-1$. 
	Additionally, since edge $(u,v)$ was violating $\proptwo$ we have $\deg{H_1}{u} + \deg{H_1}{v} \leq \beta^- - 1$, so the addition of edge $(u,v)$ decreases $\Phi_2$ by at most $\deg{H_2}{u} + \deg{H_2}{v} = \deg{H_1}{u} + \deg{H_1}{v} + 2 \leq \beta^- + 1$.
	Moreover, for each neighbor $w$ of $u$ and $v$, after adding the edge $(u,v)$, $\deg{H_2}{w}$ increases by one and since
	there are at most $\deg{H_1}{u} + \deg{H_1}{v} \leq \beta^- - 1$ choices for $w$, $\Phi_2$ decreases in total by at most $(\beta^- + 1) + (\beta^- - 1) = 2\beta^-$. 
	Since $\beta^-  \leq \beta-1$, we have that $\Phi$ increases by at least 
	$(2\beta-1) - (2\beta^-) \geq 1$ after fixing the edge $(u,v)$, finalizing the proof. 
\end{proof}

\subsection{Optimality of the $(3/2)$-Approximation Ratio in Result~\ref{res:fault-tolerant}}\label{app:optimal-fl}

Our argument is a simple modification of the one in~\cite{GoelKK12} for proving a lower bound on the one-way communication complexity of approximating matching and is provided  for the sake of completeness.
 
Let $G_1(V_1,E_1)$ be a graph on $N$ vertices such that its edges can be partitioned into $t:= N^{\Omega(1/\log\log{N})}$ \emph{induced} matchings $M_1,\ldots,M_t$ of size $(1-\delta)N/4$ for arbitrarily small constant $\delta > 0$. These graphs are referred to as $(r,t)$-\rs graphs~\cite{RuszaS78} ($(r,t)$-RS graphs for short) 
and have been studied extensively in the literature (see~\cite{AssadiKL17,GoelKK12} for more details). In particular, the existence of such graphs with
parameters mentioned above is proven in~\cite{GoelKK12}. 

Let $G(V,E)$ be a graph with $n=2N$ vertices consisting of $G_1(V_1,E_1)$ plus $N$ additional vertices $U$ that are connected via a perfect matching $M_U$ to $V_1$. In the following, we prove 
that any $f$-fault tolerant subgraph $H$ of $G$ that achieves a $(3/2-\eps)$-approximation for some constant $\eps > 0$ when $f = \Theta(n)$ requires $n^{1+\Omega(1/\log\log{n})} = \omega(f)$ edges. 

Suppose towards a contradiction that $H$ contains $o(m)$ edges where $m$ is the number of edges in the graph $G$. As edges in $G_1$ are partitioned into induced matchings $M_1,\ldots,M_t$, 
it means that there exists some induced matching $M_i$ such that only $o(1)$ fraction of its edges belong to $H$. Let the set of deleted edge $F$ be only the set of edges in the perfect matching between $U$ and $V_1$, namely, $M_U$,
which are incident to $V(M_i)$. The number of deleted edges is $O(n)$ and after deletion, $M_U$ has size $N-(1-\delta)N/2 = (1+\delta)N/2$. As such, $\mu(G \setminus F) \geq (1+\delta)N/2 + (1-\delta)N/4 \geq 3N/4$, by picking the remainder
of the matching $M_U$ and the induced matching $M_i$ (which is not incident on remainder of $M_U$ by construction). However, we argue that $\mu(H \setminus F) \leq (1+\delta)N/2 + o(N)$, simply because only $o(N)$ edges of $M_i$ 
belong $H$ and all other matchings are incident to the remaining edges of $M_U$ (we can assume remaining edges of $M_U$ belong to \emph{any} maximum matching of $H \setminus F$ because they are incident on degree one vertices). 
As such, $\mu(H \setminus F) < (2/3+2\delta)\mu(G \setminus F)$. By picking $\delta < \eps/4$, we obtain that $H$ is not a $(3/2-\eps)$-approximate $f$-fault tolerant subgraph of $G$.

\subsection{Other Standard Algorithms for Fault-Tolerant Matching}\label{app:fault-tolerant}

Since the goal in fault-tolerant matching is to prepare for adversarial deletions, the most natural approach seem to be adding many different matchings by a finding maximum matching in $G$, adding it to the subgraph $H$, deleting it from $G$, and repeating until we have $O(f + n)$ edges. A similar approach would be to let $H$ be a maximum $b$-matching, with $b$ set appropriately to end up with $O(f + n)$ edges. 
We show a lower bound of $2$ on the approximation ratio of these approaches. %, thus motivating our more sophisticated approach to reach the optimal $(3/2)$-approximation factor.

Consider the following approach first: find a maximum matching $M$ in $G$, add all the edges of $M$ to the fault-tolerant subgraph $H$, remove all the edges of $M$ from $G$, and repeat until the graph contains $C(f+n)$ edges for some large constant $C$. For $f = n/5$, we present a graph $G$ where this approach yields a graph $H$ where $\mu(H) = \mu(G)/2$. The graph is bipartite and the vertex set is partitioned into $5$ sets $X,Y,Y',Z,Z'$, each of size $n/5$.  There is an edge in $G$ from every vertex in $X$ to every vertex in $Y$ or $Z$, and there are also exactly $n/5$ vertex-disjoint edges from $Y$ to $Y'$, and similarly from $Z$ to $Z'$; those are all the edge of $G$. The fault tolerant algorithm might choose the following subgraph $H$: $H$ contains a perfect matching from $Y$ to $Y'$ and from $Z$ to $Z'$, as well as many edges from $X$ to $Y$, but no edges from $X$ to $Z$. (The algorithm can end up with such an $H$ by first choosing the maximum matching in $G$ that consists of the edges from $Y$ to $Y'$ and from $Z$ to $Z'$; then for all future iterations the maximum matching size is only $|X| = n/5$, so the algorithm might always pick a maximum matching that only contains edges between $X$ and $Y$.) Now consider the set of failures $F$ which consists of the $n/5$ edges from $Z$ to $Z'$. It is clear that $\mu(G \setminus F) = 2n/5$, while $\mu(H \setminus F) = n/5$. Note also that allowing $H$ to contain more than $O(n + f)$ edges would still not allow this approach to break through the $2$-approximation: in this lower-bound instance, even if $H$ was allowed to have up to $n^2/100$ edges, $H$ might still not contain any edges from $X$ to $Z$, and so we would still have  $\mu(H \setminus F) = n/5 =  \mu(G \setminus F)/2$.

The other natural approach is to let $H$ contain the edges of a maximum $b$-matching in $G$, where $b$ is set to a value for which the resulting $b$-matching still contains  $\Theta(f + n)$ edges. The lower-bound graph $G$ is exactly the same as above, though in this case we use $f = 2n/5$. The maximum $b$-matching $H$ might then contain the edges from $Y$ to $Y'$ and $Z$ to $Z'$, a single matching of size $n/5$ from $X$ to $Z$, and then many edges from $X$ to $Y$. It is easy to see that this is a maximum $b$-matching. Now consider the following set $F$ of deletions: $F$ contains all edges from $Z$ to $Z'$, as well as the $n/5$ edges in $H$ from $X$ to $Z$. It is easy to see that we once again have $\mu(H) = n/5$ and $\mu(G) = 2n/5$. Also as above, setting $B$ to be very large and allowing $H$ to have $n^2/100$ edges would still not break through the $2$-approximation.

%%All in all, it is easy to see from the above example that what is needed is a fault-tolerant algorithm that tries to balances the degrees in $H$, and hence makes sure to pick many edges from both $X$ to $Y$ and $X$ o $Z$. This is exactly what the EDCS does for us, and why it is able to reach the optimal $(3/2)$-approximation. 
%%
%%

\end{document}